\newtheorem{thm}{Theorem}[section] 
\newtheorem{pro}[thm]{Proposition} 
\newtheorem{lem}[thm]{Lemma}
\theoremstyle{definition}
\newtheorem{prb}[thm]{Problem}
\newtheorem{de}[thm]{Definition}
\numberwithin{equation}{section} 
\newcommand{\ob}[1]{{\mathbb{#1}}}
\newcommand{\Mat}{\operatorname{Mat}}
\newcommand{\MatThree}[9]
           { \left( \begin{smallmatrix}
                           {#1} & {#2} & {#3} \\
                           {#4} & {#5} & {#6} \\
                           {#7} & {#8} & {#9}
             \end{smallmatrix} \right) }
\newcommand{\VecTwo}[2]{ 
   \left( 
   \begin{smallmatrix} 
      #1 \\ #2 
   \end{smallmatrix} 
   \right) 
   } 
\newcommand{\VecTwoBig}[2]{ 
   \left( 
   \begin{array}{c} 
      #1 \\ #2 
   \end{array} 
   \right) 
}
\newcommand{\N}{\Bbb{ N}} 
\newcommand{\Z}{\Bbb{ Z}} 
\newcommand{\ul}[1]{\underline{#1}}
\newcommand{\MatTwo}[4]
           { \left( \begin{smallmatrix}
                           {#1} & {#2} \\
                           {#3} & {#4}
             \end{smallmatrix} \right) }
\DeclareMathAlphabet{\mathbfsl}{OT1}{cmr}{bx}{it}
\newcommand{\tupBold}[1]{\mathbfsl{#1}}
\newcommand{\vb}[1]{\tupBold{#1}}
\newcommand{\NP}{\mathrm{NP}}
\newcommand{\PP}{\mathrm{P}}
\newcommand{\matrixRing}{\operatorname{Mat}_m(\baseField)}
\newcommand{\matrixRingZZ}{\operatorname{Mat}_2(\ob{F}_2)}
\newcommand{\matrixGroup}{\operatorname{GL}_m(\baseField)}
\newcommand{\vectorSpace}{\baseField^m}
\newcommand{\Holomorph}{\operatorname{Hol}(q,m)}
\newcommand{\baseField}{\mathbb{F}_q}
\newcommand{\length}[1]{\lvert #1\rvert}
\newcommand{\setsize}[1]{\lvert #1 \rvert}
\newcommand{\PolSat}{\textsc{PolSat}}
\newcommand{\PolEqv}{\textsc{PolEqv}}
\newcommand{\GL}{\operatorname{GL}}
\def\GFF{\mathbb{F}_4}
\newcommand{\sgn}{\operatorname{sgn}}
\title[On solving equations over $S_4$]{On the complexity of solving equations over the symmetric group
  $S_4$}
\author{Erhard Aichinger \and Simon Gr\"unbacher}
\address{
Institut f\"ur Algebra,
Johannes Kepler Universit\"at Linz,
4040 Linz,
Austria}
\email{\tt erhard@algebra.uni-linz.ac.at}
\email{simon.gruenbacher@gmail.com}
\subjclass[2010]{08A40 (68Q25)}
\urladdr{http://www.jku.at/algebra}
\thanks{Supported by the Austrian Science Fund (FWF):P33878}
\date{\today}
\begin{document}
\bibliographystyle{amsalpha}
\begin{abstract}
The complexity of solving equations over finite groups has been an active area of research over the last two decades, starting with \cite{GR:TCOS}.
One important case of a group with unknown complexity is the symmetric group $S_4.$
In \cite{IKK:IPIM}, the authors prove $\exp(\Omega(\log^2 n))$ lower bounds for the satisfiability and equivalence problems over $S_4$ under the Exponential Time Hypothesis.
In the present note, we prove that the satisfiability problem $\textsc{PolSat}(S_4)$ can be reduced to the equivalence problem $\textsc{PolEqv}(S_4)$ and thus, the two problems have the same complexity.
We provide several equivalent formulations of the problem. 
In particular, we prove that $\textsc{PolEqv}(S_4)$ is equivalent to the circuit equivalence problem for $\operatorname{CC}[2,3,2]$-circuits,
which were introduced
in \cite{IKK:COMC}.
    Under the strong exponential size hypothesis (see \cite{IKK:IPIM}), such circuits cannot compute $\operatorname{AND}_n$ in size $\exp(o(\sqrt{n})).$
Our results provide an upper bound on the complexity of $\textsc{PolEqv}(S_4)$ that is based on the minimal size of $\operatorname{AND}_n$ over 
$\operatorname{CC}[2,3,2]$-circuits.
\end{abstract}

\maketitle
\section{Introduction} \label{sec:intro}

The \emph{polynomial satisfiability problem} $\textsc{PolSat}(G)$ over a finite group $G$ asks whether a given equation over $G$ has a solution. 
For example, $x_1x_2(1\; 2) = x_2x_1$ is an unsatisfiable equation over the symmetric group $S_4.$
The study of this problem has been initiated by \cite{GR:TCOS}, where the problem shown to be in $\textrm{P}$ for nilpotent groups and $\textrm{NP}$-complete for nonsolvable groups.
The problem was later determined to be in \textrm{P} also for $S_3$ \cite{HS:TCOC} and $A_4$ \cite{HS:EAES} as well as for other groups that have a certain semidirect product representation \cite{FH:TCOT}.

In contrast to $\textsc{PolSat}(G),$ the \emph{polynomial equivalence problem} $\textsc{PolEqv}(G)$ asks whether a polynomial equation holds for all variable assignments.
Clearly $\textsc{PolSat}(G) \in \textrm{P}$ implies $\textsc{PolEqv}(G) \in \textrm{P},$ since $p(x) = 1$ is not valid if and only if $p(x) = a$ is satisfiable for some $a\in G\setminus \{1\}.$
Assuming the Exponential Time Hypothesis, the converse implication does not hold in general: In 
\cite{IKK:COMC}, the authors show that for the dihedral group $D_{15},$ the equivalence problem can be solved in polynomial time whereas the satisfiability problem cannot unless the Exponential Time Hypothesis fails.
The Exponential Time Hypothesis (ETH) was introduced in \cite{IP:OTCO} and states that 3-\textsc{SAT} requires time $\exp(cn)$ for some $c > 0.$

Equation satisfiability over solvable groups is related to the size of $\text{CC}$-circuits that compute $n$-bit conjunction.
A $\operatorname{CC}[m_1, \dots, m_d]$-circuit is a depth $d$ circuit
with input and output alphabet $\{0,1\}$, which we see as a subset
of $\N_0$. Its unbounded fan-in gates at depth $h$ output $1$ if their inputs sum to a multiple of $m_h$. %
We assume that some input wires are set to the constant $1$. 
Such circuits were introduced in \cite{IKK:COMC}; a general introduction to
circuits is \cite{Vo:ITCC}, and results relating algebras to circuits can also found, e.g., in \cite{Ko:CATE}.
The \emph{size} $\length{C}$ of such a circuit $C$ is defined by the number of gates.
In \cite{ST:FAAC}, the authors show that such modular
circuits efficiently recognize the same languages as nonuniform deterministic finite automata (NUDFA) over solvable groups.
Applying this construction to the language $L = \{1^n\}$ recognized by $\operatorname{AND}_n$ and the group $G = S_4,$ we can see that a $\gamma(n)$ lower bound for conjunction on some subclass of modular circuits 
leads to a $\gamma(n)^{c}$ lower bound for the size of $S_4$-automata that compute conjunction.
Using the proof technique in \cite[Theorem~2]{BMT:ESAP}, this yields an $\exp(O(\log(n)\gamma^{-1}(n^d)))$ time algorithm to solve $\textsc{ProgramSat}(S_4),$ and therefore also $\textsc{PolSat}(S_4)$ using the reduction described in Lemma 1 of \cite{BMT:ESAP}. A detailed account is given in Theorem~\ref{thm:algo} and Section~\ref{sec:circuits}.

The Strong Exponential Size Hypothesis (SESH), introduced in \cite{IKK:IPIM}, states that for distinct primes $p_1, \dots, p_h,$ a $\operatorname{CC}[p_1, \dots, p_h]$ circuit computing $n$-bit conjunction has size at least $\exp(c\cdot n^{1/(h-1)})$ for some $c > 0.$
It is known \cite{CW:SACF} that for each $\delta > 0,$ there exists $m\in\N,$ a prime $p$ and a family of $\operatorname{CC}[p, m, p]$-circuits of size $\exp(O(n^\delta))$ which compute any symmetric function, including $\operatorname{AND}_n.$
Furthermore, by \cite{IKK:COMC} there exist $\operatorname{CC}[2,3,2]$-circuits of size $\exp(O(\sqrt{n}))$ that compute $\operatorname{AND}_n.$ %
Since the complexity of computing conjunction varies strongly based on the exact type of CC-circuit, it is important which type corresponds to $\textsc{PolSat}(S_4).$
In the present note, we resolve this question.

In \cite{IKK:IPIM}, the authors show that both $\textsc{PolSat}(S_4)$ and $\textsc{PolEqv}(S_4)$ cannot be solved in time $\exp(c\cdot \log(n)^2)$ for any $c>0$ unless ETH fails.
Note that $n$ does not refer to the number of variables in the equation, but rather to the number of symbols.
Formally, the input to both problems is a \emph{group word} $w\in (G \cup X)^n$ of length $n$ and we want to know whether this word evaluates to $1$ for some (resp. all) assignments $a: X \rightarrow G.$

In this paper, we show that $\textsc{PolSat}(S_4)$ can be reduced to the complement of $\textsc{PolEqv}(S_4)$. 
This shows that the complexity cannot differ like it does in the case of dihedral groups.
We further show that the problems can be reduced to the equivalence problem for $\operatorname{CC}[2,3,2]$-circuits.

The structure of all the problems that are investigated here is the
following: We are given an expression $e (x)$ that contains some variables
$x$ and evaluates to some element in a domain $D$ for each setting
of the variables. We suppose that $z$ is a distinguished element
from $D$. Then we ask whether the expression $e(x)$ evaluates to $z$
for all $x$ (the \emph{equivalence problem}) or whether there exists
an $x$ such that $e(x)$ evaluates to $z$ (the \emph{satisfiability problem}).
Hence an equivalence problem asks whether $\forall x : e (x) = z$
and a satisfiability problem asks whether $\exists x : e (x) = z$.
Some of our reductions will reduce to the \emph{complement of the
equivalence problem}, which asks whether $\exists x : e (x) \neq z$. For example,
for solving the equation $f(x_1,\ldots,x_n) = 0$ over the finite field
$\ob{F}_q$, we might ask whether there are $x_1, \ldots, x_n$
with $f(x_1, \ldots, x_n)^{q-1} \neq 1$. Then any $\vb{x}$ witnessing
$\exists \vb{x} : f(\vb{x})^{q-1} \neq 1$ is a witness of
$\exists \vb{x} : f(\vb{x}) = 0$.
A decision problem $A(e)$ can
be \emph{reduced in polynomial time} to $B(e)$ if there is a polynomial time
algorithm $P$ that transforms the input $e$ for $A$ into an input
$P(e)$ for $B$ such that $A(e)$ holds if and only if $B(P(e))$ holds.
We say that two decision problems are \emph{polytime-equivalent}
if each problem can
be reduced to the other one in polynomial time. For the group
$S_4$, we obtain the following connection between the various problems
associated to it:
\begin{thm} \label{thm:S4}
  Let $X = \{x_i \mid i \in \N_0\}$, let $\ob{F}_4$ be the field
  with four elements, let $S_4$ be the symmetric group on $4$ letters,
  and let $\alpha$ be a primitive element
  of $\ob{F}_4$.
  The following problems are polytime-equivalent: 
  \begin{enumerate}
  \item \label{it:ppoleqv} \emph{Polynomial nonequivalence over $S_4$}, $\textsc{coPolEqv} (S_4)$: Given a word $w \in (S_4 \cup X)^*$,
    decide whether there are an assignment $a:X \to S_4$ and
    $g \in S_4 \setminus \{1\}$ 
    such that $w$ evaluates to $g$ under the assignment $a$.
   \item \label{it:ppolsat} \emph{Polynomial satisfiability over $S_4$}, $\textsc{PolSat } (S_4)$: Given a word $w \in (S_4 \cup X)^*$,
    decide whether there is an assignment $a:X \to S_4$ 
    such that $w$ evaluates to $1$ under the assignment $a$.
  \item \label{it:parith} \emph{An arithmetic nonequivalence
  problem over $\ob{F}_4$}: Given $k, n \in \N$ and expanded polynomials $p_1, \ldots, p_k \in \Z_3 [x_1, \ldots, x_n]$, decide whether
    there exists $\vb{y} \in \{-1,1\}^n$ such that
    $\sum_{i=1}^k \alpha^{p_i(y_1, \ldots, y_n)} \neq 0$. 
  \item \label{it:pcircuits} \emph{Nonequivalence of $\textrm{CC}[2,3,2]$-circuits}:
    Given a $\textrm{CC}[2,3,2]$-circuit $C$, decide whether there is
    an input vector $x$ such that $C$ does not evaluate to~$1$ on input $x$.
      \item \label{it:prespol} \emph{Restricted polynomial nonequivalence over $\Mat_2 (\Z_2)$}: Given
    an expanded polynomial expression $f$ over the matrix ring
    $\Mat_2 (\Z_2)$ in the (noncommuting)
    variables $X$ such that all constants appearing in $f$ are invertible
    matrices, decide whether
    there is an assignment
    $a: X \to \Mat_2(\Z_2)$ such that all $a(x_i)$ are invertible matrices
    and $f$
    does not evaluate to $0$ under the assignment $a$
      (see also Definition~\ref{de:respol}).
  \end{enumerate}
\end{thm}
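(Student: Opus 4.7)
The plan is to prove the five equivalences by a cycle of polynomial-time reductions, say $(\ref{it:ppoleqv}) \to (\ref{it:prespol}) \to (\ref{it:parith}) \to (\ref{it:pcircuits}) \to (\ref{it:ppolsat}) \to (\ref{it:ppoleqv})$. The final leg is the classical direction: a word $w$ fails to be identically $1$ on $S_4$ iff for some $g \in S_4\setminus\{1\}$ the equation $wg^{-1} = 1$ has a solution, so $\textsc{coPolEqv}(S_4)$ disjunctively reduces, with constant overhead $|S_4|-1 = 23$, to $\textsc{PolSat}(S_4)$. The other legs will then take $\textsc{PolSat}(S_4)$ all the way back to $\textsc{coPolEqv}(S_4)$ via the matrix, arithmetic, and circuit intermediaries, which is the real content of the theorem.

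The structural backbone is the holomorph decomposition $S_4 \cong \operatorname{Hol}(\ob{F}_4) = \ob{F}_4 \rtimes \operatorname{GL}_2(\Z_2)$, with Klein-four normal subgroup $V \cong (\ob{F}_4,+)$ and quotient $S_4/V \cong S_3 \cong \operatorname{GL}_2(\Z_2)$, so each $s \in S_4$ is uniquely a pair $(M_s,b_s)$ acting as $x \mapsto M_s x + b_s$ on $\ob{F}_4$. For $(\ref{it:ppoleqv}) \to (\ref{it:prespol})$ I would encode $s$ by the invertible $3 \times 3$ affine block matrix built from $M_s$ and $b_s$ and then repackage its entries as an expanded polynomial expression over $\Mat_2(\Z_2)$, splitting each $S_4$-variable into its multiplicative $\operatorname{GL}_2(\Z_2)$-part and its additive $\ob{F}_4$-part, both realised as invertible elements of $\Mat_2(\Z_2)$; non-triviality of the $S_4$-word corresponds to non-vanishing of a designated matrix coordinate after evaluation.

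Passing from $(\ref{it:prespol})$ to $(\ref{it:parith})$ is done by distributing all products of the matrix polynomial: each monomial, after tracking the $V$- and $S_3$-components separately, collapses to an $\ob{F}_4$-element of the form $\alpha^{p_i(\vb{y})}$, with $\vb{y} \in \{-1,1\}^n$ encoding $S_3$-choices and $p_i \in \Z_3[x_1,\ldots,x_n]$ an expanded $\Z_3$-polynomial coming from composing the exponents produced by the $3$-cycle rotations, so that non-vanishing of the designated matrix coordinate becomes $\sum_i \alpha^{p_i(\vb{y})} \neq 0$. For $(\ref{it:parith}) \to (\ref{it:pcircuits})$ I build a depth-$3$ circuit: the bottom mod-$2$ layer produces indicator bits of the $\{-1,1\}$-inputs (using constant-$1$ wires), the middle mod-$3$ layer evaluates each $p_i(\vb{y}) \bmod 3$ as a sum of products expressible by standard modular gadgets, and the top mod-$2$ layer reads off the $\{1,\alpha\}$-coordinates of $\sum_i \alpha^{p_i(\vb{y})}$ in $\ob{F}_4 \cong \Z_2[\alpha]$. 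The leg $(\ref{it:pcircuits}) \to (\ref{it:ppolsat})$ invokes the Barrington--Straubing--Th\'erien simulation of modular circuits by nonuniform group programs over $S_4$, already cited in the introduction, together with the program-to-polynomial conversion of \cite{BMT:ESAP}.

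The main obstacle I expect is the step collapsing an $S_4$-word into the clean sum $\sum_i \alpha^{p_i(\vb{y})}$ of polynomial length: one has to handle simultaneously the noncommutativity of the $S_3$-factor and the twisted additivity of the $V$-factor, and verify that the number of $p_i$'s and the sizes of the expanded $\Z_3$-polynomials stay polynomial in $|w|$ rather than blowing up multiplicatively. Once this combinatorial bookkeeping is in place, the surrounding matrix and circuit translations are routine rewordings, the final group-theoretic leg is classical, and the cycle closes to yield the desired polytime equivalence of all five problems.
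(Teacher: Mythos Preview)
Your ``classical'' leg reducing $\textsc{coPolEqv}(S_4)$ to $\textsc{PolSat}(S_4)$ has a genuine gap: the disjunction over the $23$ nontrivial targets $g$ is a Turing (truth-table) reduction, not the many-one reduction the theorem demands. The paper's notion of polytime reduction requires a \emph{single} transformed instance $P(e)$ with $A(e)\Leftrightarrow B(P(e))$, and it is not at all obvious how to collapse ``$\exists g\neq 1\,\exists\vb{x}:w(\vb{x})=g$'' into one group-word equation. The paper flags exactly this issue (the obvious Turing reduction is noted just before Theorem~\ref{thm:poleqvtopolsat}) and supplies a genuine many-one reduction: since the Klein four-group $V$ is the unique minimal normal subgroup of $S_4$ and equals its own centraliser, there is a unary polynomial word $e$ with $e|_V=\mathrm{id}$ and $e(g)=a$ for all $g\notin V$ (for a fixed $a\in V\setminus\{1\}$); then the \emph{single} equation $x_{n+1}\,e(w(\vb{x}))\,x_{n+1}^{-1}=a$ is solvable iff $w$ is not identically~$1$. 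Without this structural idempotent, your cycle does not close under many-one reductions.

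A smaller issue concerns your direct encoding $(\ref{it:ppoleqv})\to(\ref{it:prespol})$: you propose to realise both the $\GL_2(\Z_2)$-part and the additive $\ob{F}_4$-part of each $S_4$-variable by invertible matrices, but the additive part can be $0$, and any pair of invertibles summing to a prescribed element ranges over all of $\Mat_2(\Z_2)$ rather than the embedded copy of $\ob{F}_4$. The paper sidesteps this by running the chain differently: it reduces $\textsc{PolSat}(S_4)$ (not $\textsc{coPolEqv}$) to the restricted matrix problem, and the crucial device is the collecting procedure of \cite{HS:TCOC} combined with the pigeonhole bound of Lemma~\ref{lem:shortProd}, which guarantees that at most $|\ob{F}_2^{\,2}|=4$ of the vector variables need to be nonzero, so one may enumerate those finitely many patterns and hard-wire them as constants. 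This is precisely the bookkeeping you anticipated as the main obstacle, but it lives on the $\textsc{PolSat}$ side of the cycle, not where you placed it.
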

Writing $A \le_P B$ if $A$ can be reduced to $B$ in polynomial time,
we prove
\eqref{it:ppolsat} $\le_P$ \eqref{it:prespol} in
Theorem~\ref{thm:polsattorespoleqv},
\eqref{it:prespol} $\le_P$ \eqref{it:ppoleqv} in
Theorem~\ref{resPolEqvToPolEqv},
  \eqref{it:ppoleqv} $\le_P$ \eqref{it:ppolsat} in
  Theorem~\ref{thm:poleqvtopolsat},
  \eqref{it:prespol} $\le_P$ \eqref{it:parith} in
  Theorem~\ref{thm:restop1},
  \eqref{it:parith} $\le_P$ \eqref{it:prespol} in
  Theorem~\ref{thm:back2},
  \eqref{it:parith} $\le_P$ \eqref{it:pcircuits} in
  Theorem~\ref{thm:circ} and
  \eqref{it:pcircuits} $\le_P$ \eqref{it:parith} in
  Theorem~\ref{thm:back1}.

The reductions used in proving Theorem~\ref{thm:S4}
allow us to prove the following theorem, already conjectured in \cite{IKK:ESIS}:
\begin{thm}
\label{thm:algo}
Let $\gamma: \N \rightarrow \N$ be a monotone function such that every $\operatorname{CC}[2,3,2]$-circuit computing $\operatorname{AND}_n$ has size at least $\gamma(n).$
    Then there is $d>0$ such that $\textsc{PolSat}(S_4)$ can be solved in deterministic $\exp(O(\log(n)\gamma^{-1}(n^d)))$ time and probabilistic $\exp(O(\log(n) + \gamma^{-1}(n^d)))$ time.
\end{thm}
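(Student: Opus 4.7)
The plan is to compose three results: the circuit-algebra equivalence proved as Theorem~\ref{thm:S4}, the program-to-circuit translation of \cite{ST:FAAC}, and the program-satisfiability algorithm of \cite{BMT:ESAP}. First, using the normal series $1 \lhd V_4 \lhd A_4 \lhd S_4$, whose factors $V_4 \cong \Z_2 \times \Z_2$, $A_4/V_4 \cong \Z_3$, $S_4/A_4 \cong \Z_2$ have orders that are powers of $2$, $3$ and $2$, the construction in \cite{ST:FAAC} sends every length-$\ell$ $S_4$-program (NUDFA) computing a Boolean function to a $\operatorname{CC}[2,3,2]$-circuit of size at most $\ell^{c}$ computing the same function, for some absolute constant $c$. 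Combining this with the hypothesis that every $\operatorname{CC}[2,3,2]$-circuit computing $\operatorname{AND}_n$ has size at least $\gamma(n)$ yields an $S_4$-program lower bound: no $S_4$-program of length at most $\ell$ can compute $\operatorname{AND}_m$ whenever $m > \gamma^{-1}(\ell^{c})$.

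Second, I plug this program lower bound into \cite[Theorem~2]{BMT:ESAP}, whose content is precisely that a lower bound of this shape on the length of $G$-programs computing $\operatorname{AND}$ yields a $\textsc{ProgramSat}(G)$ algorithm running in deterministic time $\exp(O(\log(\ell)\cdot \gamma^{-1}(\ell^{d})))$ and probabilistic time $\exp(O(\log(\ell)+ \gamma^{-1}(\ell^{d})))$, for some constant $d$ absorbing $c$ and the internal polynomial overhead. Finally, I apply the polynomial-time reduction $\textsc{PolSat}(S_4) \le_P \textsc{ProgramSat}(S_4)$ from \cite[Lemma~1]{BMT:ESAP}, whose polynomial input blow-up can again be absorbed into $d$. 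Composing the three steps delivers the two claimed bounds for $\textsc{PolSat}(S_4)$.

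The main obstacle will not be any new idea but careful tracking of constants. The key points to verify are that the translation of \cite{ST:FAAC} indeed lands in $\operatorname{CC}[2,3,2]$ and not in some deeper modular class---this is precisely the structural input granted by Theorem~\ref{thm:S4} together with the chief series of $S_4$---and that the polynomial overheads in both the program-to-circuit translation and in \cite[Lemma~1]{BMT:ESAP} are absorbed into the single exponent $d$ inside $\gamma^{-1}(\cdot)$ without distorting the leading $\log(\ell)$ factor. Once these points are verified, the theorem follows by direct composition of the three ingredients.
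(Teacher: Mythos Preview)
Your route is essentially the one sketched in the paper's introduction, and it is correct, but it is \emph{not} the proof the paper actually gives. The paper's proof bypasses $S_4$-programs and \cite{ST:FAAC} entirely: it applies Theorem~\ref{thm:S4} directly to convert a $\textsc{PolSat}(S_4)$ instance of length $n$ into a $\operatorname{CC}[2,3,2]$-circuit $C$ of size $n^{d_1}$ whose satisfiability is equivalent, and then argues combinatorially on $C$ itself (adapting \cite[Theorem~6.1]{IKK:COMC}). For the deterministic bound it shows that a minimal-weight solution of $C(y)=1$ has at most $\gamma^{-1}(|C|)$ ones (otherwise the restricted circuit would compute $\operatorname{AND}$ on too many inputs), so one can enumerate low-weight inputs; for the probabilistic bound it proves by induction that the solution set has density at least $2^{-\gamma^{-1}(|C|)}$, so random sampling works. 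Your approach instead chains $\textsc{PolSat}\le_P\textsc{ProgramSat}$ via \cite[Lemma~1]{BMT:ESAP}, then the program-to-circuit translation of \cite{ST:FAAC} to turn the $\operatorname{CC}[2,3,2]$ lower bound for $\operatorname{AND}$ into an $S_4$-program lower bound, and finally invokes \cite[Theorem~2]{BMT:ESAP} as a black box. This is shorter to write but outsources the combinatorics, while the paper's argument is self-contained and makes the low-weight and density claims explicit for circuits. One small muddle in your write-up: Theorem~\ref{thm:S4} plays no role in your chain of reductions---the fact that \cite{ST:FAAC} lands in $\operatorname{CC}[2,3,2]$ comes from the chief series $1\lhd V_4\lhd A_4\lhd S_4$ alone, not from Theorem~\ref{thm:S4}; you should drop that citation from your list of ingredients or, alternatively, use Theorem~\ref{thm:S4} directly (as the paper does) and skip programs altogether.
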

The proof will be given in~Section~\ref{sec:circuits}.
Assuming that SESH holds for $\operatorname{CC}[2,3,2]$-circuits, this yields an $\exp(c \log(n)^3)$ time algorithm for $\textsc{PolSat}(S_4).$ 

\section{Preliminaries on Groups and Matrices}
In this section, we prove results about groups and matrices that will be needed later. We write $\N$ for the set of positive integers, $\N_0 :=
\N \cup \{0\}$, and
for $k \in \N_0$, $\ul{k} = \{x \in \N \mid 1 \le x \le k\}$. 
\begin{lem}
\label{lem:shortProd}
Let $G$ be a group of order $k$, let $n\in\N$ and let $g_1,\dotsc,g_n, a\in G$ with $\prod_{i=1}^n g_i = a$. 
Then there exist an $m\in\N$ with $1 \leq m \leq k$ and indices $1 \leq i_1<\dots< i_m\leq n$ with $\prod_{j=1}^{m} g_{i_j} = a.$
\end{lem}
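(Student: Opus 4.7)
The plan is to prove the lemma by induction on $n$, using the classical prefix-product pigeonhole argument. I would set $p_0 := 1$ and $p_\ell := \prod_{i=1}^\ell g_i$ for $\ell \in \{1, \ldots, n\}$, so that $p_n = a$ by hypothesis.

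For the base case $n \leq k$, the claim is immediate: one takes $m := n$ and $i_j := j$ for every $j \in \{1, \ldots, n\}$, so that the chosen subsequence coincides with the original sequence and its product equals $a$.

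For the inductive step, assume $n \geq k + 1$. Then the $k+1$ elements $p_0, p_1, \ldots, p_k$ all lie in a group of size $k$, so the pigeonhole principle produces indices $0 \leq i < j \leq k$ with $p_i = p_j$. Hence $\prod_{\ell = i+1}^{j} g_\ell = p_i^{-1} p_j = 1$, and therefore the shortened sequence $g_1, \ldots, g_i, g_{j+1}, \ldots, g_n$ still multiplies to $a$. Its length $n - (j - i)$ lies between $n - k \geq 1$ and $n - 1$, so the inductive hypothesis applies and yields indices drawn from $\{1, \ldots, i\} \cup \{j+1, \ldots, n\} \subseteq \{1, \ldots, n\}$ that witness the claim for the original sequence.

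I do not expect any real obstacle; the only points requiring care are that the removed block is genuinely nonempty, which follows from $i < j$, and that the shortened sequence itself is nonempty, which follows from $j - i \leq k \leq n - 1$. Both are immediate from the range chosen for the pigeonhole argument.
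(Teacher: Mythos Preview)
Your proof is correct and follows essentially the same approach as the paper: induction on $n$, with a prefix-product pigeonhole argument to delete a block of factors whose product is the identity. The only cosmetic differences are that you include the empty prefix $p_0 = 1$ and restrict attention to the first $k+1$ prefixes, whereas the paper applies pigeonhole to the prefixes $b(1),\ldots,b(n)$; neither change affects the argument.
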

\begin{proof}
We prove the statement by induction on $n$.
If $n \leq k$, the claim obviously holds.
If $n > k$, define $b(l):=\prod_{i=1}^l g_i$ for $l \in \{1,\dots,n\}$. 
Since $b:\{1,\dots,n\}\rightarrow{G}$ maps $n$ elements to $k < n$ elements, we have $u<v\in\{1,\dots,n\}$ with $b(u)=b(v)$.
    Therefore $b(u)^{-1}b(v)=1=\prod_{i=u+1}^v g_i$, which implies $\prod_{i=1}^u g_i \prod_{i=v+1}^n g_i =a$.
    This product has less than $n$ factors, thus by the induction hypothesis, it has a subproduct of length at most $k$ as required.
\end{proof}

For $m \in \N$ and a prime power $q$, $\matrixRing$ denotes the set of
$m \times m$-matrices and $\matrixGroup$ the
set of invertible $m\times m$-matrices over $\baseField$.
\begin{lem}
\label{lem:sum}
Let $m\in\N$ and let $q$ be a prime power.
Let $A \in \matrixRing.$ 
Then either $m=1, q=2$ and $A = (1)$, or there exist $B,C \in \matrixGroup$ with $A = B+C.$
\end{lem}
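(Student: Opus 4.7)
The plan is to split on the dimension $m$, and in both cases reduce the problem to finding an invertible matrix $B$ such that $C := A - B$ is also invertible, so that $A = B + C$.

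For $m = 1$, write $A = (a)$ with $a \in \baseField$. If $q \geq 3$ then $|\baseField^\times| \geq 2$, so we can pick $b \in \baseField^\times \setminus \{a\}$, and then $c := a - b$ is also nonzero, yielding $A = (b) + (c)$. If $q = 2$ the only choice is $b = c = 1$, giving $b + c = 0$; this handles $A = (0) = (1) + (1)$, while $A = (1)$ is precisely the exception stated in the lemma.

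For $m \geq 2$, the most direct attempt is $B = \lambda I_m$: then $B$ is invertible iff $\lambda \neq 0$, and $A - \lambda I_m$ is invertible iff $\lambda$ is not a root of the characteristic polynomial of $A$ in $\baseField$. Since that polynomial has at most $m$ roots, the bad values of $\lambda$ number at most $m + 1$, so a good choice exists whenever $q \geq m + 2$.

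In the remaining range $m \geq 2$ and $q \leq m + 1$ (which includes the main difficulty $q = 2$), pass to the rational canonical form: $A$ is similar over $\baseField$ to a direct sum of companion blocks $C_{f_1} \oplus \cdots \oplus C_{f_k}$ with invariant factors $f_1 \mid f_2 \mid \cdots \mid f_k$. Since similarity and block-diagonal sums preserve the property of being a sum of two invertibles, it suffices to decompose each block. A companion block of size $\geq 2$ admits an explicit decomposition as an upper-triangular invertible matrix plus a lower-bidiagonal invertible matrix (possibly after a small perturbation of the last column to avoid a vanishing diagonal entry). The only $1 \times 1$ block that escapes the $m = 1$ analysis is $(1)$ over $\ob{F}_2$, and by the divisibility $f_1 \mid f_2 \mid \cdots$ such $(1)$-blocks, if present, fill the first several slots. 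They can be grouped in pairs into $I_2$-summands, which decompose via $I_2 = \MatTwo{0}{1}{1}{1} + \MatTwo{1}{1}{1}{0}$; when an odd number of them occur (which forces either the presence of a larger block, or else reduces to $A = I_m$), we handle $I_m$ for $m \geq 2$ directly as $I_m = B + (I_m + B)$ with $B$ the companion matrix of $x^m + x + 1$, whose characteristic polynomial has neither $0$ nor $1$ as a root over $\ob{F}_2$. The main obstacle is organizing this last case analysis carefully so that no isolated exceptional $(1)$-block over $\ob{F}_2$ is left behind when $m \geq 2$.
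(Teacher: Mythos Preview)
Your $m=1$ case and the scalar trick $B=\lambda I$ for $q \ge m+2$ are fine, but the core of the argument---handling $q \le m+1$ via the rational canonical form---has two unresolved gaps. First, the decomposition of a companion block $C_f$ of size $d \ge 2$ over $\ob{F}_2$ is only asserted: with $L$ lower-bidiagonal having ones on the diagonal and subdiagonal, $C_f + L$ is upper-triangular with last diagonal entry $1+c_{d-1}$, which vanishes when $c_{d-1}=1$; your ``small perturbation'' then shifts the obstruction to $c_{d-2}$, and there is no uniform choice along these lines. The scalar trick cannot rescue this, since over $\ob{F}_2$ the only candidate $\lambda=1$ fails whenever $f(1)=0$, and even for $2<q\le m+1$ a companion block of size $\ge q-1$ can have every nonzero field element as an eigenvalue (e.g.\ $f=(x-1)(x-2)$ over $\ob{F}_3$). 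Second, when a single $(1)$-block is left over together with larger companion blocks, you note that a larger block must be present but give no mechanism for absorbing the lone $(1)$ into it; your $I_m$ construction does not apply to this mixed situation.

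The paper's route avoids both difficulties by using rank equivalence instead of similarity: write $A = P E_k Q$ with $P,Q$ invertible and $E_k$ the diagonal matrix with $k$ ones followed by zeros. For $q>2$ pick $a\ne 0,1$ and set $E_k = aI + (E_k - aI)$; both summands are diagonal with entries in $\{a,\,1-a,\,-a\}\subseteq\ob{F}_q^\times$. For $q=2$, chop the diagonal of $E_k$ into blocks of the four fixed shapes $I_2$, $I_3$, $\MatTwo{1}{0}{0}{0}$, and the zero block, each of which has an explicit two-invertible decomposition over $\ob{F}_2$. Only these four matrices ever need to be handled, so no invariant-factor bookkeeping is required.
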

\begin{proof}
We use linear algebra to obtain $P,Q \in \matrixGroup$
   with $A = P E_k Q$, where
   $E_k$ is the diagonal matrix whose first $k$ entries in the diagonal
   are $1$ and the other $m-k$ entries are $0$. 
   The identity matrix is denoted by $I$.
   If $q > 2$, then we choose an element
   $a \in \ob{F}_q \setminus \{0,1\}$ and write
   $E_k = a I + (E_k - a I)$. Both summands are diagonal matrices,
   and all their diagonal entries are in $\{a, 1-a,0-a \}$, hence
   nonzero, and therefore both matrices are invertible.
   Hence $P (aI) Q + P(E_k - a I)Q$ is the required representation.

   If $q = 2$, then we first observe that
   the $0$-matrix of every format can be written as $I + I$,
   where $I$ is the identity matrix.
   Furthermore,
   $\MatTwo{1}{0}{0}{0} =  \MatTwo{0}{1}{1}{0} +
   \MatTwo{1}{1}{1}{0}$,
   $\MatTwo{1}{0}{0}{1} =  \MatTwo{0}{1}{1}{1} +
   \MatTwo{1}{1}{1}{0}$, and
   $\MatThree{1}{0}{0}{0}{1}{0}{0}{0}{1} =
   \MatThree{1}{0}{1}{0}{0}{1}{1}{1}{1} +
   \MatThree{0}{0}{1}{0}{1}{1}{1}{1}{0}$.
    We will now see that the diagonal of $E_k$ can be divided into blocks
    consisting only of these matrices:
   If $k$ is even, we divide $E_k$ into $k/2$ blocks
   of $2 \times 2$-matrices and an $(m-k) \times (m-k)$ zero-matrix
   by writing
   \[
   E_k = \left( \begin{smallmatrix}
     \MatTwo{1}{0}{0}{1} &  &  & 0 \\[-2mm]
      & \ddots &  &  \\
      &  & \MatTwo{1}{0}{0}{1} &  \\[1mm]
     0 &  &  & \left(
     \begin{smallmatrix}
       {0} & {\cdots} & {0} \\[-1.5mm]
       {\vdots } & {\ddots} & {\vdots } \\
       {0} & {\cdots} & {0}.
     \end{smallmatrix}
      \right)
   \end{smallmatrix} \right).
   \]
     If $k$ is odd and $k \ge 3$, we divide $E_k$ into
     one block of size $3$, $\frac{k-3}{2}$ blocks of size $2$ and
     the zero matrix of size $(m-k) \times (m-k)$. 
     If $k = 1$ and $m=1$, then the first alternative in the statement of
     the Theorem occurs. If $k=1$ and $m \ge 2$,
     we divide $E_k$ into the $2 \times 2$-block
   $\MatTwo{1}{0}{0}{0}$ and the $(m-2) \times (m-2)$ zero matrix.
   Each of these blocks can be written as a sum of two invertible
   matrices, which yields the required decomposition of $E_k$, and
   hence of $A$. 
\end{proof}

We will now introduce some notation concerning polynomials over
a noncommutative ring $R$ with unity. For any algebraic structure,
a \emph{polynomial expression} is a well-formed term consisting
of variables, function symbols and constants from the algebra.
For a ring, we call a polynomial expression \emph{expanded}
if it is a sum of products. We call such an expanded polynomial
expression $p$ a \emph{restricted polynomial expression}
if all constants from $R$ appearing in $p$ are invertible
elements of $R$.
To state this more precisely,
let $X = \{x_1, \ldots, x_n\}$ be a set
of variables. A \emph{restricted monomial expression}
is a product of variables and invertible elements of $R$; for example,
when $r_1, r_2$ are \emph{invertible} elements from $R$, then
$x_2 r_1 x_2 x_1 x_1 r_2$ is a restricted polynomial expression.
Formally, we see a restricted monomial expression
as a nonempty word in $(X\cup U(R))^\ast$, where $U(R)$ is the set of
invertible elements of $R$. Hence for the matrix ring
$\matrixRing$, a restricted monomial expression is an
element of $(X \cup \matrixGroup)^{\ast}$.
A \emph{restricted polynomial expression} is a sum of
restricted monomial expressions; we will use $0$ to denote
the restricted polynomial expression which is the (empty) sum of zero monomial expressions.
We will now define the \emph{length} of such expressions.
For a word $e \in (X \cup U(R))^\ast$, the length
$|e|$ is the length of the word $e$, i.e., the $n \in \N_0$
such that $e \in (X \cup U(R))^n$. The length of
the restricted polynomial expression $p = \sum_{i=1}^k m_k$
is defined by $|p| := \sum_{i=1}^k |m_k|$.

 \begin{pro} \label{pro:Riscomplete}
   Let $m \in \N$, let $q$ be a prime power, and let $R$ be the
   matrix ring $\matrixRing$.
   Let $k \in \N$, and let $f$ be a function from $R^k$ to $R$.
   Then there exists a restricted polynomial expression
   $p(x_1, \ldots, x_k)$ that induces the function $f$ on $R$.
 \end{pro}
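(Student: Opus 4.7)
The plan is to proceed in two stages. First, I would show that every function $R^k\to R$ is representable by an (unrestricted) polynomial expression over $R$; then I would use Lemma~\ref{lem:sum} to rewrite each non-invertible constant as a sum of two invertible ones and distribute, yielding a restricted polynomial expression.

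For the first stage, let $E_{ij}$ be the matrix units of $R=\matrixRing$. The identity $E_{1i}\,X\,E_{j1}=x_{ij}E_{11}$ (where $x_{ab}$ are the entries of $X\in R$) allows me to extract any entry of any input matrix as an element of the subring $E_{11}RE_{11}$, which is isomorphic to $\ob{F}_q$ via $c\mapsto cE_{11}$. Writing $f(\vb{x})=\sum_{i,j=1}^m f_{ij}(\vb{x})\,E_{ij}$ with coordinate functions $f_{ij}\colon R^k\to\ob{F}_q$, I would view each $f_{ij}$ as a function in the $km^2$ entries of the inputs and apply Lagrange interpolation over $\ob{F}_q$ to express $f_{ij}$ as an $\ob{F}_q$-polynomial in those entries. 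Translating this polynomial to a polynomial expression over $R$ via the extraction identity (and realizing each $\ob{F}_q$-scalar $c$ by the constant $cE_{11}\in R$) produces an element of the form $f_{ij}(\vb{x})E_{11}$; pre-multiplying by $E_{i1}$ and post-multiplying by $E_{1j}$ gives $f_{ij}(\vb{x})E_{ij}$, and summing over $i,j$ yields a polynomial expression for $f$.

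For the second stage, take the polynomial expression from the first stage and, in each of its monomials, replace each non-invertible constant $c$ by $c'+c''$ with $c',c''\in\matrixGroup$ as provided by Lemma~\ref{lem:sum}. Distributing all such sums through each monomial converts the expression into a sum of restricted monomial expressions, as required. The exceptional case of Lemma~\ref{lem:sum} occurs only for $m=1$, $q=2$, where $R=\ob{F}_2$ and the only non-invertible element is $0$; any monomial containing the factor $0$ evaluates to $0$ and may simply be omitted from the sum. The main obstacle I expect is the bookkeeping in the first stage, namely producing the noncommutative polynomial over $R$ that corresponds to a given $\ob{F}_q$-polynomial in the extracted entries, and verifying that multiplication of extracted entries respects multiplication in $\ob{F}_q$; this reduces to the basic identity $E_{ab}E_{cd}=\delta_{bc}E_{ad}$ and is routine once it is spelled out carefully.
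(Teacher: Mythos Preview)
Your argument is correct, and the second stage (replacing non-invertible constants via Lemma~\ref{lem:sum} and distributing) matches the paper exactly. The difference lies in the first stage: the paper simply invokes the classical fact that a finite simple ring with unity is polynomially complete \cite{Kn:SVFC, KP:PCIA}, so every function $R^k\to R$ is already a polynomial function, and then passes directly to the rewriting step. You instead give an explicit, self-contained construction via matrix units and Lagrange interpolation over $\baseField$. Your route is more elementary and makes the existence of a representing expression completely concrete, at the cost of some bookkeeping; the paper's route is a one-line citation. One small remark: your discussion of the exceptional case of Lemma~\ref{lem:sum} is harmless but unnecessary, since the exception there concerns the \emph{invertible} matrix $(1)\in\Mat_1(\ob{F}_2)$, which never needs to be rewritten, while the non-invertible $0$ is handled by $0=1+1$ (or, as you note, by dropping the monomial).
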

 \begin{proof}
   Since $R$ is a finite simple ring with
   unit, every $k$-ary function on $R$ is a polynomial function
   \cite{Kn:SVFC, KP:PCIA}. Therefore $f$ is induced
   by a polynomial expression $p_1$ over $R$ using the operation
   symbols $+,-,\cdot, 0, 1$. We will now
   transform $p_1$ into the required form. First of all,
   $p_1$ may contain constants from $R$ that are not invertible.
   Then 
   using Lemma \ref{lem:sum}, we replace every non invertible matrix
   in $p_1$ by a sum of two
   invertible matrices and obtain a polynomial expression $p_2$
   such that all constants occurring in $p_2$ are invertible elements
   of $R$. By expanding $p_2$, we obtain $p_3$ that is the sum
   of summands of the form $e$ and $-e$, where $e$ is a monomial.
   Replacing the constant $1$ by the
   identity matrix $I$,  $-e$ by $(-I) \cdot e$, 
   and omitting every monomial containing the operation symbol
   $0$, we obtain
   a restricted polynomial
   expression $p_4$ that induces the function $f$.
 \end{proof}

 \section{Reducing equation solvability over groups to an equivalence problem over matrix rings}
 In \cite{HS:TCOC}, Horv\'ath and Szab\'o 
 determined the complexity of polynomial equivalence
 testing over the group $S_3$. They used that
 $S_3$ is a semidirect product of $\Z_3$ with is automorphism group,
 which allowed them to translate polynomial equivalence over $S_3$ into a problem on polynomials over the field $\ob{F}_3$.
  The group $S_4$ is isomorphic to the semidirect product of the abelian group
 $\ob{Z}_2 \times \ob{Z}_2$ with its automorphism group
  $\operatorname{GL}_2 (\ob{F}_2)$; in other words, $S_4$ is isomorphic to
 the \emph{holomorph} of $\ob{Z}_2 \times \ob{Z}_2$.
  Proceeding similar to \cite{HS:TCOC},
  we will translate a question on $S_4$ into a question
  on polynomials over the noncommutative matrix ring $\matrixRingZZ$. 
  In contrast to \cite{HS:TCOC}, we will reduce the
  polynomial \emph{satisfiability} problem $\PolSat(S_4)$.
 In particular, Theorem~\ref{thm:polsattorespoleqv}
 reduces $\PolSat(S_4)$ to the \emph{restricted polynomial equivalence problem}
 over $\matrixRing$; the latter problem asks whether a restricted polynomial
 expression over $\matrixRingZZ$ evaluates to $0$ on all \emph{invertible}
 inputs; for example, $x_1x_1x_1x_1x_1x_1 + \MatTwo{1}{0}{0}{1}$ evaluates to $0$ for
 all $x_1$ in $\operatorname{GL}_2 (\ob{F}_2)$. 
 The group $S_4$ is contained in the class of all groups
 that are semidirect products of the additive group of the
 $m$-dimensional vector space over a $q$-element field
 $\vectorSpace$ with the automorphism group of this vector space
 $\matrixGroup$. 
 \begin{de}
Let $\mathbb{F}_q$ be the field of size $q$ and $m \in \N$. 
The \emph{holomorph} $\Holomorph$ is the group $\{\VecTwo{a}{B} \mid a \in \mathbb{F}_q^m, B \in \matrixGroup\}$
 with multiplication defined as $\VecTwo{v}{A}\cdot\VecTwo{w}{B}:=\VecTwo{v + Aw}{AB}$.
 \end{de}
 In this setting, $S_4$ is isomorphic to $\operatorname{Hol}(2,2)$.
 We note that if $q$ is not prime then the name \emph{holomorph}
 is usually used for the semidirect product of
 $\vectorSpace$ with its full automorphism group. If $q = p^k$,
 then the automorphism group of $\vectorSpace$ is
 $\mathrm{GL}_{mk} (\Z_p)$; our usage here is different,
 since we defined $\Holomorph$ as the semidirect product
 of $\vectorSpace$ with $\matrixGroup$.
 Our results bring new insights mainly for the cases
 $(m, q) \in \{ (2,2), (2,3) \}$: for $m = 1$,
 $\Holomorph$ is meta-abelian, and for
 such a group $G$,  $\PolSat(G)$ and
 $\PolEqv (G)$ lie in $\PP$ \cite[Corollary 8]{Ho:TCOT}
 and for $m \ge 2$ and
 $(m,q) \not\in \{ (2,2), (2,3) \}$, the group
 $\Holomorph$ is not solvable, and thus
 $\PolSat (G)$ is $\NP$-complete \cite{GR:TCOS} and  
 $\PolEqv (G)$ is co-$\NP$-complete by \cite{HLMS:TCOT}.
 The group $\operatorname{Hol}(2,3)$ is a 
 solvable group of order $432$. 
 
 In these holomorphs, the product of $n$ elements can be computed using
 the formula
 \begin{equation}  \label{eq:product}
 \prod_{i=1}^n \VecTwo{v_i}{A_i} = \VecTwo{ \sum_{i=1}^n (\prod_{j=1}^{i-1} A_j) v_i}{\prod_{i=1}^n A_i},
 \end{equation}
 which can easily be proved by induction.
 
\begin{de} \label{de:respol}
  Given a restricted polynomial expression $p$, the \emph{restricted polynomial equivalence problem} asks whether $p$ evaluates to $0$ on all invertible inputs, that is, whether $p(\vb{x})=0$ for all  $\vb{x}\in \matrixGroup^n$.
\end{de}

In the following two lemmata, we bound the length of the product and the functional composition
of two restricted polynomial expressions.
\begin{lem}
\label{productLemma}
Let $p_1,\dots,p_k$ be restricted polynomial expressions,
and let $q$ be the expansion of $\prod_{i=1}^k p_i$. Then we have 
$\length{
q
} \leq (\sum_{i=1}^k \length{p_i})(\prod_{i=1}^k \length{p_i})$
\end{lem}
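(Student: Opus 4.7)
The plan is to expand $\prod_{i=1}^k p_i$ as a double sum indexed by choices of monomials from each factor, then compute its length by switching the order of summation. Write each $p_i$ as $\sum_{j=1}^{s_i} m_{i,j}$ where the $m_{i,j}$ are restricted monomial expressions, i.e., nonempty words in $(X\cup U(R))^\ast$. By definition $\length{p_i}=\sum_{j=1}^{s_i}\length{m_{i,j}}$, and since each monomial has length at least $1$, we get $s_i\le\length{p_i}$. The degenerate case in which some $p_i$ is the empty sum $0$ can be dispatched separately: then both $q$ and the right-hand side of the claimed bound are $0$.

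Next, the expansion is
\[
q \;=\; \sum_{(j_1,\ldots,j_k)\in\prod_i\{1,\ldots,s_i\}} m_{1,j_1}\,m_{2,j_2}\cdots m_{k,j_k},
\]
and each summand is itself a monomial (a concatenation of nonempty words) of length $\sum_{i=1}^k\length{m_{i,j_i}}$. Summing over the index vector and then swapping the order of summation, the contribution from a fixed $i$ is $\bigl(\sum_{j_i=1}^{s_i}\length{m_{i,j_i}}\bigr)\prod_{l\ne i}s_l=\length{p_i}\prod_{l\ne i}s_l$, so
\[
\length{q}\;=\;\sum_{i=1}^k \length{p_i}\prod_{l\ne i}s_l.
\]

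To conclude, use $s_l\le\length{p_l}$ to obtain $\length{q}\le\sum_{i=1}^k\length{p_i}\prod_{l\ne i}\length{p_l}$. Each term of this last sum equals the full product $\prod_{l=1}^k\length{p_l}$, so it simplifies to $k\prod_{l=1}^k\length{p_l}$. Finally, since each $\length{p_l}\ge 1$ we have $\sum_{l=1}^k\length{p_l}\ge k$, which gives $k\prod_l\length{p_l}\le \bigl(\sum_l\length{p_l}\bigr)\bigl(\prod_l\length{p_l}\bigr)$, as required.

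There is no genuine difficulty here beyond bookkeeping; the main thing to watch is that monomials are by convention nonempty, which is what guarantees both $s_i\le\length{p_i}$ and $\sum_l\length{p_l}\ge k$. The bound in the statement is quite loose, which suggests it is stated in this form to streamline its application in the subsequent lemma on functional composition rather than to be tight.
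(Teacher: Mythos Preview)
Your argument is correct and uses essentially the same counting idea as the paper: both rely on the facts that the expansion has at most $\prod_i \length{p_i}$ monomials and each such monomial has length at most $\sum_i \length{p_i}$. The paper simply multiplies these two crude bounds, whereas you first derive the exact formula $\length{q}=\sum_i \length{p_i}\prod_{l\ne i}s_l$ and only then loosen it, obtaining the slightly sharper intermediate estimate $k\prod_l\length{p_l}$ before matching the stated bound; this extra precision is harmless but not needed for the applications.
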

\begin{proof}
Since each $p_i$ has at most $\length{p_i}$ restricted monomial expressions, there can be only $\prod_{i=1}^k \length{p_i}$ restricted monomial expressions in the resulting product. 
Since a restricted monomial expression in $p_i$ can have length at most $\length{p_i}$, any product of such restricted monomial expressions can have length at most $\sum_{i=1}^k \length{p_i}$.
\end{proof}

\begin{lem}
\label{compositionLemma}
Let $p$ and $q$ be restricted polynomial expressions, where $q$ contains only one variable $x$ and
$p \neq 0$. 
The restricted polynomial expression  $q\circ p$ obtained from $q$ by replacing all occurrences of $x$ by
$p$ and fully expanding the result then satisfies
 $\length{q\circ p} \leq \length{q}\cdot\length{p}^{\length{q}}.$
\end{lem}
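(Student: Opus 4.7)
The plan is to decompose $q$ and $p$ into their restricted monomial expressions, substitute $p$ for $x$ in each monomial of $q$, and account for the total length in the resulting expansion. I would write $q = \sum_{j=1}^{s} m_j$ and $p = \sum_{i=1}^{t} n_i$ in their monomial forms. Since $p \neq 0$ and every restricted monomial has length at least $1$, this gives $1 \leq t \leq |p|$ and $|n_i| \leq |p|$ for every $i$. For each $j$ I set $\ell_j := |m_j|$ and let $k_j$ be the number of occurrences of $x$ in $m_j$, so $k_j \leq \ell_j$, and I factor $m_j = u_{j,0}\, x\, u_{j,1}\, x \cdots x\, u_{j,k_j}$, where the words $u_{j,h}$ consist of invertible elements of $R$ and have combined length $\ell_j - k_j$.

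Substituting $p$ for $x$ and fully distributing turns $m_j$ into the sum of $t^{k_j}$ restricted monomial expressions indexed by tuples $(i_1, \ldots, i_{k_j}) \in \{1,\ldots,t\}^{k_j}$, where the tuple-indexed summand has length $(\ell_j - k_j) + \sum_{h=1}^{k_j} |n_{i_h}|$. Summing over all tuples, the constant part contributes $t^{k_j}(\ell_j - k_j)$, and in each of the $k_j$ positions the variable part contributes $\sum_{i=1}^{t} |n_i| = |p|$ while the other $k_j - 1$ positions run freely, giving
$$|m_j \circ p| = t^{k_j}(\ell_j - k_j) + k_j\, t^{k_j - 1}\, |p|.$$
Applying the bound $t \leq |p|$ collapses this to $\ell_j |p|^{k_j}$, and since $k_j \leq \ell_j \leq |q|$ I obtain $|m_j \circ p| \leq \ell_j |p|^{|q|}$.

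Summing over $j$ and using $\sum_{j=1}^{s} \ell_j = |q|$ then yields
$$|q \circ p| \leq \sum_{j=1}^{s} \ell_j |p|^{|q|} = |q| \cdot |p|^{|q|},$$
which is the required inequality; the degenerate case $q = 0$ is trivial, as both sides vanish. I do not anticipate a conceptual obstacle: the proof is a careful combinatorial bookkeeping. The main source of risk is a miscount in the length summation (for instance, confusing the number of summands with the total length, or mishandling the $k_j = 0$ case where the formula above still correctly gives $\ell_j$). Introducing the explicit factorization $m_j = u_{j,0}\, x\, u_{j,1}\, x \cdots x\, u_{j,k_j}$ before computing any sums is the safeguard that makes each step mechanical.
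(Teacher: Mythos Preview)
Your proof is correct, and the single-monomial computation (factoring $m_j = u_{j,0}\,x\,u_{j,1}\cdots x\,u_{j,k_j}$ and summing over all $t^{k_j}$ tuples) matches the paper's base case essentially verbatim. The difference lies in how the pieces are assembled: the paper proves $|m_j\circ p|\le \ell_j\,|p|^{\ell_j}$ for a single monomial and then runs an induction on the number of monomials in $q$, which forces a case split on $|p|=1$ versus $|p|>1$ and an appeal to the inequality $a+b\le ab$ for $a,b\ge 2$ to merge $|p|^{|q_1|}+|p|^{|q'|}$ into $|p|^{|q|}$. You instead bound $|p|^{k_j}\le |p|^{|q|}$ immediately (valid since $|p|\ge 1$) and then sum $\sum_j \ell_j\,|p|^{|q|}=|q|\,|p|^{|q|}$ directly. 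Your route is shorter and avoids both the induction and the case analysis; the paper's route is slightly sharper at the monomial level (giving $\ell_j|p|^{\ell_j}$ rather than $\ell_j|p|^{|q|}$) but pays for it when combining terms.
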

\begin{proof}
Let $n$ be the number of monomial expressions in $q$.
We proceed by induction on $n$. If $n = 0$, then $q = q \circ p = 0$, and thus
$|q \circ p| = |q| = 0$, which implies that the claimed inequality holds.
If $n = 1,$ then $q$ is a restricted monomial expression
$q_1 x q_2 \cdots q_r x q_{r+1}$, where $q_1, \ldots, q_{r+1}$ are (possibly empty)
words over the alphabet $\matrixGroup$, and $p = \sum_{i = 1}^l m_i$, where all
$m_i$ are restricted monomials and $l \ge 1$. Then
\[
q \circ p =  \sum_{i: \ul{r} \to \ul{l}} q_1 m_{i(1)} q_2 \cdots q_r m_{i(r)} q_{r+1},
\]
and therefore
\[
  \begin{split}
    |q \circ p| & = \sum_{i: \ul{r} \to \ul{l}} \big( \sum_{j=1}^{r+1} |q_j|  + \sum_{j=1}^r |m_{i(j)}| \big) 
     = l^r \sum_{j=1}^{r+1} |q_j| +
    \sum_{j=1}^r \sum_{i:\ul{r} \to \ul{l}} |m_{i(j)}| \\  
    & = l^r \sum_{j=1}^{r+1} |q_j| + r l^{r-1} \sum_{k=1}^l |m_k| 
    \le |p|^r \big(\sum_{j=1}^{r+1} |q_j| \big) + r \cdot |p|^{r-1} \cdot |p| \\
    & = |p|^r \big( (\sum_{j=1}^{r+1} |q_j|) + r \big) 
     = |p|^r |q|  \le |p|^{|q|} |q|.
  \end{split}
\]  
For the induction step, we assume that $n > 1$ and that $q=q_1+q'$ for a restricted monomial expression $q_1$ and a restricted polynomial expression $q'$ with $n-1$ monomials.
If $|p| = 1$, then $|q \circ p| = |q| = |q| 1^{|q|} = |q| |p|^{|q|}$, and hence the claimed inequality holds.
In the case $|p| > 1$, we have $\length{q\circ p} = 
\length{q_1\circ p}+\length{q'\circ p} \leq
\length{q_1}\cdot \length{p}^{\length{q_1}} + \length{q'}\cdot \length{p}^{\length{q'}}
\leq
\length{q}\cdot \length{p}^{\length{q_1}} + \length{q}\cdot \length{p}^{\length{q'}}
= \length{q} \cdot (\length{p}^{\length{q_1}} + \length{p}^{\length{q'}})$.
Since $\length{p} > 1$, $\length{q_1} > 0$ and $\length{q'} > 0$, and since
$a+b \le ab$ for $a,b \ge 2$, we have
$\length{q} \cdot (\length{p}^{\length{q_1}} + \length{p}^{\length{q'}}) \le
\length{q} \cdot \length{p}^{\length{q_1}} \cdot \length{p}^{\length{q'}} =
 \length{q}\cdot \length{p}^{\length{q_1}+\length{q'}}
= \length{q} \cdot  \length{p}^{\length{q}}$.
 \end{proof}

Over a finite field $\baseField$, one often uses that
the solutions of $p_1 = p_2 = 0$ are those values for which
$(1-p_1^{q-1})(1-p_2^{q-1}) \neq 0$. A similar construction is
also possible over $\matrixRing$. 

\begin{lem} \label{ResPolSatToResPolEqv}
    Let $m\in\N$ and let $q$ be a prime power with $(m,q) \neq (1,2)$.
    Then there is $C > 0$ and a polynomial time algorithm that takes a pair $(p_1, p_2)$ of restricted polynomial expressions in the variables $x = (x_1,\dots,x_n)$ and returns a restricted polynomial expression $p_3$ with the following properties: 
\begin{enumerate}
\item \label{it:p1} For all $y \in \matrixRing^n \, \colon \,
      (p_1(y) = 0 \text{ and } p_2(y) = 0) \Longleftrightarrow p_3(y) \neq 0$. 
    \item \label{it:p2} The length of $p_3$ satisfies  $|p_3| \le 2C^3(\length{p_1}\cdot \length{p_2})^{2C}.$
\end{enumerate}
\end{lem}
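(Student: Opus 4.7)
The plan is to imitate, in the noncommutative setting $R=\matrixRing$, the familiar device that over a finite field $\baseField$ replaces the conjunction $p_1(y)=p_2(y)=0$ by the single inequality $(1-p_1^{q-1})(1-p_2^{q-1})\neq 0$. The substitute for $1-p^{q-1}$ will be a fixed one-variable restricted polynomial expression $f$ whose length depends only on $m$ and $q$, and which realises the indicator function $f(0)=I$ and $f(r)=0$ for $r\in R\setminus\{0\}$, where $I$ is the identity matrix.

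First I would invoke Proposition~\ref{pro:Riscomplete} to produce such an $f$; since this merely describes a particular function $R\to R$, it yields a restricted polynomial expression of some length $C$ that depends only on $m$ and $q$. The hypothesis $(m,q)\neq(1,2)$ is exactly what is needed so that Lemma~\ref{lem:sum} can be applied inside the proof of Proposition~\ref{pro:Riscomplete} to rewrite the constant $0$ as a sum of two invertible matrices. From this point on, $f$ and $C$ are treated as constants.

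Given $(p_1,p_2)$, the algorithm first disposes of the trivial edge cases: if $p_1=p_2=0$ as restricted polynomial expressions, return $p_3:=I$; if exactly one of them, say $p_i$, equals $0$, return $p_3:=f\circ p_{3-i}$. In the main case $p_1,p_2\neq 0$, I would set $q_i:=f\circ p_i$ (substitute $p_i$ for the unique variable of $f$ and fully expand) and then put $p_3:=q_1\cdot q_2$ (fully expanded as a product). Correctness is immediate: $q_i(y)\in\{0,I\}$ with $q_i(y)=I$ iff $p_i(y)=0$, so $p_3(y)=q_1(y)q_2(y)$ is nonzero iff $q_1(y)=q_2(y)=I$, iff $p_1(y)=p_2(y)=0$. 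All new constants that appear are products of elements of $\matrixGroup$, hence invertible, so $p_3$ is indeed restricted.

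For the length bound I would combine the two existing length lemmata. Lemma~\ref{compositionLemma} yields $\length{q_i}\leq C\length{p_i}^C$, and Lemma~\ref{productLemma} with $k=2$ gives $\length{p_3}\leq(\length{q_1}+\length{q_2})\length{q_1}\length{q_2}$. Using $\length{p_i}\geq 1$, the sum $\length{q_1}+\length{q_2}$ is bounded by $2C(\length{p_1}\length{p_2})^C$ and the product $\length{q_1}\length{q_2}$ by $C^2(\length{p_1}\length{p_2})^C$; multiplying these estimates gives the required $\length{p_3}\leq 2C^3(\length{p_1}\length{p_2})^{2C}$. The polynomial running time is clear, since $f$ has constant length and both composition-with-expansion and product-with-expansion can be computed in time polynomial in $\length{p_1}+\length{p_2}$. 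The only genuinely substantive step is the construction of $f$; once Proposition~\ref{pro:Riscomplete} delivers it, everything else is routine bookkeeping with the two length lemmata, so I do not anticipate a real obstacle.
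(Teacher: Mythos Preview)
Your proof is correct and follows the same route as the paper: obtain the indicator $f$ from Proposition~\ref{pro:Riscomplete}, set $p_3$ to the expansion of $f(p_1)\cdot f(p_2)$, and bound $|p_3|$ via Lemmas~\ref{compositionLemma} and~\ref{productLemma} with the same arithmetic. Your explicit treatment of the degenerate cases $p_i=0$ (needed because Lemma~\ref{compositionLemma} assumes $p\neq 0$) is slightly more careful than the paper's proof, which does not address them.
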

\begin{proof}
  We use Proposition \ref{pro:Riscomplete} to obtain a restricted polynomial expression $f$ with the property that $f(0) = I$ and $f(y) = 0$ for $y \in \matrixRing \setminus \{0\}$, and we define a polynomial expression
  $p_3$ as the expanded form of  $f(p_1(x))\cdot f(p_2(x))$; then   
  \[
    p_3 (y) =  f(p_1(y))\cdot f(p_2(y))
    \]
  for all $y \in \matrixRing^n$.  
Hence $p_3(x)\neq 0$ is equivalent to $f(p_1(x)) = f(p_2(x)) = I$, which is equivalent to $p_1(x) = p_2(x) = 0$,
which proves item~\eqref{it:p1}.
For proving~\eqref{it:p2}, we set $C:=\length{f}.$
    By Lemma \ref{compositionLemma} we have $\length{f(p_i(x))} \leq C\length{p_i}^C$ for $i \in \{1,2\}$. 
    Therefore by Lemma \ref{productLemma} we have
    $\length{p_3} \leq (C\length{p_1}^C + C\length{p_2}^C) C^2\length{p_1}^C\length{p_2}^C =
    C (\length{p_1}^C + \length{p_2}^C) C^2\length{p_1}^C\length{p_2}^C$.
    Since for $a,b \ge 1$ we have $a+b \le 2ab$, the last expression is at most
    $2 C^3 (|p_1| \cdot |p_2|)^{2C}$. 
\end{proof}
We are now ready to start the reduction of $\PolSat (\Holomorph)$.
In a first step, we reduce $\PolSat(\Holomorph)$ to a disjunction
of restricted polynomial inequalities.
\begin{lem}
    \label{HolSatToResPolEqv}
    Let $m\in\N$, let $q$ be a prime power with $(m,q) \neq (1,2)$, and
    let $l:= \# \vectorSpace = q^m$.
    Then there is a $C > 0$ and a polynomial time algorithm that takes a group word $p$ over $\Holomorph$ in
    $n$ variables of length $k$ and returns a set $E$ of restricted polynomial expressions with the following properties:
    \begin{enumerate}
        \item \label{it:e1}  The cardinality of $E$ is at most $(nl)^l$.
        \item \label{it:e2} For every $e\in E$, the length of $e$ satisfies $|e| \le 2C^3(2k^2(k+1))^{2C}$.
        \item \label{it:e3} There exists $x\in \Holomorph^n$ with $p(x)=1$ if and only if there exist $e\in E$ and $Z \in \matrixGroup^n$ with $e(Z) \neq 0$.
    \end{enumerate}
\end{lem}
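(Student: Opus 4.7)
The plan is to exploit the semidirect-product structure of $\Holomorph$ via formula~(\ref{eq:product}). Writing $p = g_1 \cdots g_k$, any assignment $x_\nu \mapsto \VecTwo{v_\nu}{A_\nu}$ makes $p$ evaluate to $\VecTwo{s}{M}$, where $M$ is the product over positions of the matrix parts and $s = \sum_{i=1}^k M_i w_i$ with $M_i$ the prefix matrix product and $w_i$ the vector part at position $i$. Thus $p = 1$ iff $M = I$ and $s = 0$. Collecting like terms by variable gives $s = \sum_{\nu=1}^n C_\nu v_\nu + c$, where $C_\nu$ and $c$ depend only on the matrix parts of the assignment and on constants appearing in $p$.

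Next I would reduce to small-support vector assignments. Applying Lemma~\ref{lem:shortProd} to the $n$-term sum $\sum_\nu C_\nu v_\nu = -c$ inside the additive group $\baseField^m$ of order $l$ yields a subset $T$ with $\length{T} \le l$ and $\sum_{\nu\in T} C_\nu v_\nu = -c$; zeroing out $v_\nu$ for $\nu\notin T$ preserves both $M$ and the vector equation. Hence without loss of generality, at most $l$ vector parts of a satisfying assignment are nonzero. I would then enumerate all tuples $f = ((\nu_1, v_1), \ldots, (\nu_l, v_l)) \in (\ul{n} \times \baseField^m)^l$, of which there are $(nl)^l$; each $f$ determines the assignment $\hat{v}_\nu^f := \sum_{t: \nu_t = \nu} v_t$, and every small-support assignment arises in this way.

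For each $f$, I would build $e_f$ in matrix variables $Z_1, \ldots, Z_n$ as follows. The matrix equation yields $p_1(Z) := M(Z) + (-I)$, a sum of two restricted monomials of lengths $k$ and $1$, so $\length{p_1} \le k+1$. The vector equation becomes $p_2(Z) := \sum_{i=1}^k M_i(Z)\cdot (w_i^f e_1^T)$, where $w_i^f$ is the fixed vector at position $i$ after substituting $\hat{v}^f$ for the variable parts. The main technical obstacle is that the matrix constants $w_i^f e_1^T \in \matrixRing$ need not be invertible, while a \emph{restricted} polynomial expression demands invertible constants; Lemma~\ref{lem:sum} resolves this (since $(m,q) \neq (1,2)$) by writing each such matrix as a sum of two matrices in $\matrixGroup$, at the cost of doubling the monomial count. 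The lift $w \mapsto we_1^T$ was chosen so that $p_2(Z) = 0 \iff \sum_i M_i(Z)\, w_i^f = 0$ (the first column of $p_2$ is precisely this sum, the other columns vanish identically). This yields $\length{p_2} \le \sum_{i=1}^k 2i = k(k+1)$. Finally, Lemma~\ref{ResPolSatToResPolEqv} combines $p_1, p_2$ into a single $e_f := p_3$ with $e_f(Z) \neq 0 \iff (p_1(Z) = 0 \text{ and } p_2(Z) = 0)$ and $\length{e_f} \le 2C^3(\length{p_1}\length{p_2})^{2C} \le 2C^3(2k^2(k+1))^{2C}$, using $(k+1)\cdot k(k+1) \le 2k^2(k+1)$ for $k \ge 1$. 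Correctness then follows in both directions: a satisfying group assignment (reduced to small support) selects an $f$ and $Z_\nu := A_\nu$ making $e_f(Z) \neq 0$, and conversely any such $(f, Z)$ reconstructs $x_\nu := \VecTwo{\hat{v}_\nu^f}{Z_\nu}$ with $p(x) = 1$. Since $|E| = (nl)^l$ is polynomial in $n$ for fixed $m, q$, and each $e_f$ is built in polynomial time, the overall algorithm runs in polynomial time.
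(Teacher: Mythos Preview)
Your argument is correct and follows essentially the same route as the paper's proof: both exploit formula~\eqref{eq:product}, reduce to small-support vector assignments via Lemma~\ref{lem:shortProd}, enumerate these (you parametrize by $(\ul{n}\times\baseField^m)^l$, the paper by the set $T$ of small-support vectors, giving the same $(nl)^l$ bound), lift the vector equation to a matrix equation by padding with zero columns and invoking Lemma~\ref{lem:sum}, and finish with Lemma~\ref{ResPolSatToResPolEqv}. The only cosmetic difference is that the paper first applies a ``collecting procedure'' grouping the prefix-product monomials by which variable or constant vector they multiply, whereas you work directly with the sum over all $k$ positions; both routes arrive at the same length estimate.
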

\begin{proof}
  We assume that $p = w_1 \dots w_k$ is a word in $(\Holomorph \cup \{x_1, \dots, x_n\})^*$ of length $k>0$.
  We will now use that $\Holomorph$ is a semidirect product, and therefore we split each variable
  $x_j$ (which is intended to range over $\Holomorph$) in the word $p$ into two variables $y_j$ and $Z_j$, where
  $y_j$ ranges over $\vectorSpace$ and $Z_j$ ranges over $\matrixGroup$.
  In this way, we obtain a word $$p' = \VecTwo{u_1}{V_1} \dots \VecTwo{u_k}{V_k}.$$ Formally,
  if the $i$\,th letter  $w_i$ of $p$ is a variable $x_j$, then $\VecTwo{u_i}{V_i} := \VecTwo{y_j}{Z_j}$, and if
  $w_i \in \Holomorph$, then $u_i$ and $V_i$ are the components of $w_i$, i.e.,
  $\VecTwo{u_i}{V_i} := w_i$. We now use~\eqref{eq:product} to obtain the  expression
    $$
    p '' = \VecTwo{I u_1 + V_1u_2 + \dots + V_1\cdots V_{k-1}u_k}{V_1\cdots V_k}, 
    $$
     where $I$ is the $m \times m$ identity matrix. Then
    for each assignment
    to the variables $y_1, Z_1, \ldots, y_n, Z_n$, the expressions $p'$ and $p''$
    evaluate to the same element of $\Holomorph$. 
    
    Now let $c_1, \dots, c_l$ be an enumeration of the elements of $\vectorSpace$.
    Using the \emph{collecting procedure} introduced in \cite{HS:TCOC}, we group the restricted monomials
    $I, V_1, \ldots, V_1 \cdots V_{k-1}$ 
    in the first component of $p''$ according to the $y_i$ or $c_i$ they act on. Additionally setting $p_0 := V_1\cdots V_k$,
    we obtain restricted polynomial expressions $p_0, \dots, p_{n+l}$ such that 
    $$
     \VecTwo{p_1(Z)y_1 + \dots + p_n(Z)y_n + p_{n+1}(Z)c_1 + \dots + p_{n+l}(Z)c_l}{p_0(Z)}
    $$
     is equivalent to $p'$ and $p''$. 

    The equation $p(x) = 1$ now has a solution $x\in \Holomorph^n$ if and only if the system 
    \begin{equation} \label{eq:collected}
      p_1(Z)y_1 + \dots + p_n(Z)y_n \, + \, p_{n+1}(Z)c_1 + \dots + p_{n+l}(Z)c_l = 0 \,\, \land \,\, p_0(Z) = 1
    \end{equation}
    has a
    solution $Z\in (\matrixGroup)^n, y\in (\vectorSpace)^n$. Suppose now that $(y,Z)$
    solves~\eqref{eq:collected}.
    Using Lemma \ref{lem:shortProd}, there are $d \in \N_0$ with $d \le l$ and
    $1 \le i_1 < \cdots < i_{d} \le n$ such that
    \[
    p_1(Z)y_1 + \cdots + p_n(Z)y_n = p_{i_1}(Z) y_{i_1} + \cdots p_{i_{d}} (Z) y_{i_{d}}.
    \]
    Then we obtain a new solution $(y', Z')$ of~\eqref{eq:collected} by setting
    $y'_j := 0$ for $j \not\in \{i_1, \ldots, i_{d}\}$, $y'_{i_j} := y_{i_j}$ for $j \in \{1,\ldots, d\}$ and
    $Z' := Z$. Therefore, when looking for a solution of~\eqref{eq:collected}, it is sufficient
    to look for a solution in which 
    $(y_1, \dots, y_n)$ is an element of
    \[ T := \{v \in (\vectorSpace)^n \mid \#\{i: v_i\neq 0\} \leq l\}. \]
    Now since $l$ depends only on the group, but not on the input $p$,
    the size of $T$ is small enough for our purposes (polynomial in $n$) so that we can look for
    solutions $(y,Z)$ of~\eqref{eq:collected} by trying to find a solution for each setting $y := v \in T$
    separately.
    Hence we fix $v \in T$, and we are going to define a restricted polynomial expression $e_v$ with the following property:
    \begin{multline} \label{eq:pe1}
      \text{For all } Z \in \matrixGroup^n \, \colon \, e_v(Z) \neq 0
      \Longleftrightarrow \\
      \big(p_1(Z)v_1 + \dots + p_n(Z)v_n \, + \,
       p_{n+1}(Z)c_1 + \dots + p_{n+l}(Z)c_l = 0 \, \land \,  p_0(Z) = 1\big).
    \end{multline}
    To this end,
    let $i_1<\dots<i_d$ be the indices where $v$ is nonzero. Since we wish to obtain an equation over the matrix ring
    $\matrixRing$, we replace a vector $v_i \in \vectorSpace$ appearing in~\eqref{eq:pe1} by a matrix
    $W  = (v_i \,\, 0)$ whose
    first column is $v_i$ and whose other columns are $0$; this matrix $W$ will not be regular (unless $m = 1$ and $v_i \neq 0$),
    but Lemma~\ref{lem:sum} allows us to write $W$ as the sum of two regular matrices $M$ and $N$;
    similarly, we replace the $c_i$. Stated more
    precisely,
    Lemma \ref{lem:sum} allows us to  find $M_1, \dots, M_{d+l}, N_1, \dots, N_{d+l} \in \matrixGroup$ such that
    $M_j + N_j = (v_{i_j} \; 0)$ for all $j \in \{1,\ldots,d\}$ and $M_j + N_j = (c_{j-d}\; 0)$ for all
    $j \in \{d+1, \ldots,  d+l\}$. We define
    \begin{multline*}
    a_v(Z) :=p_{i_1}(Z)(M_1 + N_1) + \dots + p_{i_d}(Z)(M_d + N_d) \\ + p_{n+1}(Z)(M_{d+1} + N_{d+1})+\dots + p_{n+l}(Z)(M_{d+l} + N_{d+l}).
    \end{multline*}
    Now for each $Z \in \matrixRing$,
    $$p_1(Z)v_1 + \dots + p_n(Z)v_n + p_{n+1}(Z)c_1 + \dots + p_{n+l}(Z)c_l = 0$$
    if and only if $a_v(Z) = 0$. 
    By expanding $a_v (Z)$, we obtain a restricted polynomial expression $b_v(Z)$ with
    $a_v(Z) =b_v (Z)$ for all $Z \in \matrixRing$.
    Now we invoke the algorithm from Lemma \ref{ResPolSatToResPolEqv} on the pair $(b_v(Z), p_0(Z) + (-I))$ to obtain a restricted polynomial expression $e_v (Z)$ that satisfies~\eqref{eq:pe1}.
    We claim that the set
    \[
    E := \{ e_v(Z) \mid v \in T \}
    \]
    satisfies all requirements in the statement of the lemma.

    For item~\eqref{it:e1}, we observe that for $n \ge l$, the set $T$ has at most
    $\VecTwo{n}{l} l^l$ elements (choose $l$ possibly nonzero entries first, and then choose
    one of $l$ values for each of these $l$ entries), and thus $|T| \le n^l l^l = (nl)^l$.
    If $n < l$, then $|T| = l^n$, and hence $|T| \le l^l \le (nl)^l$.
    
    For item~\eqref{it:e2}, we observe that $p_1(Z), \ldots, p_n (Z)$ together contain
    at most $k$ monomials because each of these monomials comes from $p''$.
    Since each of these monomials is of length at most $k$, we obtain that
    $b_v (Z)$ is of length at most $2k^2$.
    The polynomial $p_0 (Z)$ is of length $k$, hence $p_0 (Z) + (-I)$ is of length $k+1$.
    From Lemma \ref{ResPolSatToResPolEqv}, we obtain
    $|e_v| \le 2C^3 (2k^2 (k+1))^{2C}$, where $C$ is the constant from this Lemma.

    For item~\eqref{it:e3}, we observe that $p(x) = 1$ is equivalent to the system~\eqref{eq:collected},
    which has a solution in $\Holomorph$ if and only if  there are $v \in T$ and $Z \in \matrixGroup^n$ such
    that $e_v (Z) \neq 0$.
\end{proof}

We note that the disjunction
of the inequalities $p_i (x) \neq 0$ is satisfiable if and only if
$\sum p_i (x) y_i \neq 0$ is satisfiable, where the $y_i$'s are new
variables. The next lemma is a similar construction that takes into
account that we wish to set the $y_i$'s only to invertible matrices
in $\matrixRing$.
\begin{lem}
\label{lem:manyeqv}
Let $m\in\N$ and let $q$ be a prime power with $(m,q) \neq (1,2)$.
There is an algorithm that takes restricted polynomial expressions $p_1, \dots, p_k$ over $\matrixRing$ and returns 
a restricted polynomial expression $u$ of size at most $2k + 2(\length{p_1} + \dots + \length{p_k})$ such that 
$$
\exists i \in \ul{k} \,\, \exists \vb{x}\in \matrixGroup^n \,:\, p_i(\vb{x}) \neq 0
$$
if and only if
$$
\exists \vb{x} \in \matrixGroup^{n}, \vb{y}\in \matrixGroup^{2k}: u(\vb{x},\vb{y}) \neq 0.
$$
This algorithm runs in time polynomial in $\length{u}$.
\end{lem}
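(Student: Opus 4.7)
The plan is to collapse the disjunction ``$\exists i\,\exists \vb{x}\in\matrixGroup^n\colon p_i(\vb{x})\ne 0$'' into the single nonvanishing statement ``$\exists \vb{x},\vb{y}\colon u(\vb{x},\vb{y})\ne 0$'' by using two auxiliary invertible ``indicator'' variables per polynomial. The key enabler is Lemma~\ref{lem:sum}: because $(m,q)\ne (1,2)$, every element of $\matrixRing$---in particular both $I$ and $0$---can be written as a sum of two matrices from $\matrixGroup$. Hence a pair $(y_i^{(1)},y_i^{(2)})\in\matrixGroup^2$ already suffices to realise $y_i^{(1)}+y_i^{(2)}$ as any prescribed element of $\matrixRing$.

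Introducing fresh variables $y_1^{(1)},y_1^{(2)},\ldots,y_k^{(1)},y_k^{(2)}$, I would define
\[
  u(\vb{x},\vb{y}) \;:=\; \sum_{i=1}^k \bigl( p_i(\vb{x})\, y_i^{(1)} \;+\; p_i(\vb{x})\, y_i^{(2)} \bigr)
\]
and expand the result into a sum of restricted monomials by distributing each $y_i^{(j)}$ across the monomials of $p_i$. For the forward direction of the equivalence, if $u(\vb{x},\vb{y})\ne 0$ then at least one summand $p_i(\vb{x})\,y_i^{(j)}$ is nonzero, and since $y_i^{(j)}\in\matrixGroup$ is invertible this forces $p_i(\vb{x})\ne 0$.

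For the converse, suppose $p_{i_0}(\vb{x})\ne 0$ for some index $i_0$. I would apply Lemma~\ref{lem:sum} to the matrix $I$ to obtain invertible matrices $y_{i_0}^{(1)},y_{i_0}^{(2)}$ with $y_{i_0}^{(1)}+y_{i_0}^{(2)}=I$, and apply it to the matrix $0$ to obtain, for every $j\ne i_0$, invertible matrices $y_j^{(1)},y_j^{(2)}$ with $y_j^{(1)}+y_j^{(2)}=0$. Regrouping $u$ as $\sum_i p_i(\vb{x})(y_i^{(1)}+y_i^{(2)})$, all contributions for $j\ne i_0$ vanish and we are left with $u(\vb{x},\vb{y})=p_{i_0}(\vb{x})\ne 0$.

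The algorithm simply writes out this expansion, so it runs in time polynomial in $|u|$. The only step requiring care is the length estimate: summing the lengths of the $2k$ products $p_i(\vb{x})\,y_i^{(j)}$, each of which, when written as ``$p_i$ with a single new $y$-symbol appended'', contributes $|p_i|+1$ as a block, yields the claimed bound $|u|\le 2k+2\sum_i|p_i|$. I expect this bookkeeping (tracking the two $y$-symbols introduced per polynomial against the $+2k$ slack in the bound) to be the main technical point, as the equivalence itself is a one-line consequence of Lemma~\ref{lem:sum}.
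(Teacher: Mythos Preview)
Your approach is essentially identical to the paper's: the paper also sets $u$ to be the expansion of $\sum_{i=1}^k (y_i^{(1)}+y_i^{(2)})\,p_i(\vb{x})$ (with the $y$'s on the left rather than the right, which is immaterial) and argues the equivalence exactly as you do, with the cosmetic simplification of taking $(y_j^{(1)},y_j^{(2)}):=(I,-I)$ directly for $j\ne i_0$ instead of invoking Lemma~\ref{lem:sum} on the zero matrix.

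One small correction to the bookkeeping you flagged: after full expansion, appending a single $y$-symbol to each monomial of a polynomial $p_i$ with $r_i$ monomials adds $r_i$ symbols, not one, so the product $p_i(\vb{x})\,y_i^{(j)}$ contributes $|p_i|+r_i$ rather than $|p_i|+1$; summing gives $|u|=2\sum_i|p_i|+2\sum_i r_i$. The paper does not verify the stated bound explicitly either, and since $r_i\le |p_i|$ the result is still linear in $\sum_i|p_i|$, which is all that is needed for the polynomial-time claim.
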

\begin{proof}
We let $u$ be the expansion of 
\[
    \sum_{i=1}^k (y_i^{(1)} + y_i^{(2)})p_i(x).
\]
For the ``if''-direction, assume that we have $\vb{x} \in \matrixGroup^{n}, \vb{y}\in \matrixGroup^{2k}$ such that $u(\vb{x},\vb{y}) \neq 0.$
Then there must be some $i\in\{1,\dots, k\}$ such that $p_i(\vb{x}) \neq 0.$
For the ``only if''-direction, assume that we have $\vb{x}\in \matrixGroup^n$ and $ i \in \ul{k}$ such that $p_i(\vb{x}) \neq 0$.
We use Lemma \ref{lem:sum} to set $y_i^{(1)}$ and $y_i^{(2)}$ such that $y_i^{(1)} + y_i^{(2)} = I$. We set $(y_j^{(1)}, y_j^{(2)}) := (I, -I)$ for $j\neq i.$
The assignment $(\vb{x},\vb{y})$ then satisfies $u(\vb{x},\vb{y}) = p_i(\vb{x}) \neq 0.$
\end{proof}

The results in this section provide a polynomial time reduction
of
$\textsc{PolSat} (\Holomorph)$ to the polynomial equivalence problem over $\matrixRing$:
\begin{thm}
    \label{thm:polsattorespoleqv}
Let $m\in\N$ and let $q$ be a prime power with $(m,q) \neq (1,2)$.
 Then
 $\textsc{PolSat}(\Holomorph)$ can be reduced in polynomial time
 to the complement of the restricted polynomial equivalence problem over $\matrixRing.$
\end{thm}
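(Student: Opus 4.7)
The plan is to obtain the reduction by simply composing the two constructions already established, namely Lemma~\ref{HolSatToResPolEqv} and Lemma~\ref{lem:manyeqv}. Given a group word $p$ of length $k$ in $n$ variables over $\Holomorph$, I would first apply Lemma~\ref{HolSatToResPolEqv} to obtain a set $E = \{e_1, \ldots, e_N\}$ of restricted polynomial expressions over $\matrixRing$ such that $p$ is satisfiable in $\Holomorph$ if and only if there exist $i \in \ul{N}$ and $Z \in \matrixGroup^n$ with $e_i(Z) \neq 0$. Crucially, $l := q^m$ and the constant $C$ from Lemma~\ref{ResPolSatToResPolEqv} depend only on $(m,q)$, not on the input; hence $N \le (nl)^l$ is polynomial in $n$ and each $|e_i| \le 2C^3(2k^2(k+1))^{2C}$ is polynomial in $k$.

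Next, I would feed the list $e_1, \ldots, e_N$ into the algorithm of Lemma~\ref{lem:manyeqv} to obtain a single restricted polynomial expression $u$ over $\matrixRing$ of length at most $2N + 2(|e_1| + \cdots + |e_N|)$ with the property that
\[
\exists i \in \ul{N}\ \exists Z \in \matrixGroup^{n}\,:\, e_i(Z) \neq 0
\ \iff\
\exists (X,Y) \in \matrixGroup^{n} \times \matrixGroup^{2N}\,:\, u(X,Y) \neq 0.
\]
Chaining the two equivalences yields that $p$ is satisfiable over $\Holomorph$ if and only if $u$ is \emph{not} identically zero on invertible inputs, which is exactly an instance of the complement of the restricted polynomial equivalence problem over $\matrixRing$.

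To conclude, I would check that the whole procedure runs in polynomial time: $N$ is polynomial in $n$, each $|e_i|$ is polynomial in $k$, so $|u|$ is polynomial in the input size $n + k$; both constituent algorithms run in time polynomial in their output size, so their composition runs in time polynomial in the size of the original word $p$. Since no substantial new ideas are required beyond invoking the existing lemmas with the correct parameters, there is no real obstacle; the only point worth highlighting is the observation that $l$ and $C$ are absolute constants (depending only on the holomorph, not on $p$), which is what ensures the polynomial bound on $|u|$.
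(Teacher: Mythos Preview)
Your proposal is correct and follows essentially the same approach as the paper's own proof: apply Lemma~\ref{HolSatToResPolEqv} to get the finite set $E$, then combine its elements into a single expression via Lemma~\ref{lem:manyeqv}, observing that the bounds from these lemmas involve only constants depending on $(m,q)$ and hence yield polynomial size and running time. The paper's argument is virtually identical, differing only in notation (it calls the word $w$ and writes $r = |E|$).
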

\begin{proof}
    Let $w(x_1,\dots,x_n)$ be a group word of length $k$ over $\Holomorph$.
    By Lemma \ref{HolSatToResPolEqv}, we can construct a set $E$ such that $w(x_1, \ldots, x_n) = 1$ has a solution if and only there are
    $e \in E$ and $Z \in \matrixGroup^n$ with $e(Z) \neq 0$.
    This construction can be done in time polynomial in the length of $w$.
    Let $r:= |E|$.
    Using Lemma \ref{lem:manyeqv}, we can combine the elements of $E$ into a single restricted polynomial expression $p (x_1, \ldots, x_n, y_1, \ldots, y_{2r})$ such that
    there is $e \in E$ and  $Z \in \matrixGroup^{n}$ with
    $e (Z) \neq 0$ if and only if
    there is $(Z,U) \in \matrixGroup^{n+2r}$ with $p(Z, U) \neq 0$.
    Again, $p$ can be constructed in time polynomial in the length
    of $w$.
    Now we have reduced the question $\exists \vb{x} :w( \vb{x}) = 1$
    to the question whether
    $\forall (\vb{z}, \vb{u}) \in \matrixGroup^{n+2r} : p(\vb{z}, \vb{u}) = 0$
    is false, which is an instance of
    the complement of
    the restricted polynomial equivalence problem over $\matrixRing$.
\end{proof}

\section{Reducing an equivalence problems over matrix rings to
  polynomial equivalence over groups}

In this section, we provide a reduction from the restricted equivalence problem
over $\matrixRing$ to $\PolEqv (\Holomorph)$.

\begin{pro} \label{pro:idem}
  Let $m \in \N$ and let $q$ be a prime power with $(m,q) \neq (1,2)$.  Let
  $G := \operatorname{Hol}(m, q)$, and let $V := \{ \VecTwo{v}{I} \mid v \in \ob{F}_q^m \}$,
  where $I$ denotes the $m \times m$-identity matrix.
  We assume that $(m,q) \neq (1,2)$.
  Then there is a word $e(x) \in (G \cup \{x\})^*$ such
  that
  $e(g) \in V$ and $e(h) = h$ for all $g \in G$ and $h \in V$.
\end{pro}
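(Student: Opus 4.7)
My plan is to construct $e(x)$ as a product of commutators of the form $[x,g_i] := x^{-1} g_i^{-1} x g_i$, choosing each constant $g_i \in G$ so that the matrix part of the factor automatically collapses to the identity of $\matrixGroup$. Writing $\pi : G \to \matrixGroup$ for the projection $\VecTwo{v}{A} \mapsto A$, a direct calculation using formula~\eqref{eq:product} shows that for $g := \VecTwo{0}{\lambda I}$ with $\lambda I$ a scalar in $\matrixGroup$, one has $[x,g] = \VecTwo{(\lambda^{-1}-1)A^{-1}v}{I}$ whenever $x = \VecTwo{v}{A}$. In particular, $[x,g] \in V$ always, because $\pi([x,g]) = [A,\lambda I] = I$ identically, and on $V$ (where $A = I$) the commutator acts as the scalar multiplication $v \mapsto (\lambda^{-1}-1)v$.

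For $q > 2$, I set $e(x) := \prod_{i=1}^{k} [x,g_i]$ with $g_i := \VecTwo{0}{\lambda_i I}$ and $\lambda_i \in \ob{F}_q^\times$. Because $V$ is abelian, $e(x) \in V$ for every $x$ and $e|_V(v) = \bigl(\sum_i (\lambda_i^{-1}-1)\bigr)v$, so the task reduces to choosing the $\lambda_i$ so the sum equals $1 \in \ob{F}_q$. When $\mathrm{char}\,\ob{F}_q \neq 2$, a single commutator with $\lambda_1 := 2^{-1}$ works; when $\mathrm{char}\,\ob{F}_q = 2$ and hence $q \geq 4$, two commutators suffice once I find $\mu_1,\mu_2 \in \ob{F}_q^\times$ with $\mu_1+\mu_2 = 1$ (and set $\lambda_i := \mu_i^{-1}$), which is exactly what Lemma~\ref{lem:sum} applied to the one-dimensional space $\ob{F}_q$ provides.

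The remaining case $q = 2$ (hence $m \geq 2$) is subtler, because $\matrixGroup = \operatorname{GL}_m(\ob{F}_2)$ has trivial center, so scalar commutators are unavailable. I instead use non-central $g = \VecTwo{0}{G}$ together with the identity $[A,G]^N = I$ in $\matrixGroup$, where $N := \exp([\matrixGroup,\matrixGroup])$, which guarantees $[x,g]^N \in V$ for every $x$. In the main case $(m,q) = (2,2)$ one has $[\matrixGroup,\matrixGroup] \cong A_3$ with $N = 3$, coprime to the characteristic; the calculation analogous to the scalar case gives $V$-action $v \mapsto (G^{-1}+I)v$ in characteristic~$2$, so setting $e(x) := [x,g_1]^N [x,g_2]^N$ with $G_1,G_2 \in \matrixGroup$ satisfying $G_1^{-1}+G_2^{-1} = I$—again available via Lemma~\ref{lem:sum}—produces the required retraction.

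The hard part is juggling two simultaneous constraints: a group-theoretic identity in $\matrixGroup$ ensuring $\pi(e(x)) = I$ identically (controlled either by centrality of the $g_i$ or by raising to $\exp([\matrixGroup,\matrixGroup])$), and a linear-algebra equation in $\Mat_m(\ob{F}_q)$ forcing the induced action on $V$ to equal the identity matrix. Lemma~\ref{lem:sum} is precisely the tool that resolves the linear-algebra side by writing the relevant target—either $1 \in \ob{F}_q$ for the scalar approach or $I \in \Mat_m(\ob{F}_q)$ for the $q = 2$ case—as a sum of two invertibles whenever $(m,q) \neq (1,2)$, so the hypothesis of the proposition is exactly the hypothesis of that lemma.
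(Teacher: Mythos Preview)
Your argument is correct for $q>2$ and for $(m,q)=(2,2)$, and it is genuinely different from the paper's proof. The paper simply observes that $V$ is the unique minimal normal subgroup of $G$ with $C_G(V)=V$ and then invokes \cite[Theorem~4.1(2)]{FK:UPOC} on the inner automorphism near-ring $I(G)$ to obtain $e$ (in fact an $e$ with $e(g)=1$ for $g\notin V$, which is more than the proposition asks). Your route is explicit and self-contained: it produces a concrete short word from commutators with carefully chosen constants, and it ties the hypothesis $(m,q)\neq(1,2)$ directly to Lemma~\ref{lem:sum}. For the two groups that matter in this paper, $\operatorname{Hol}(2,2)\cong S_4$ and $\operatorname{Hol}(2,3)$, your construction is arguably preferable.

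There is, however, a genuine gap for $q=2$ and $m\ge 3$, which the proposition does cover. You set $N:=\exp\bigl([\matrixGroup,\matrixGroup]\bigr)$ so that $[A,G]^N=I$ forces $[x,g]^N\in V$; on $V$ the factor $[h,g]^N$ then acts by $v\mapsto N(G^{-1}+I)v$. Your $(2,2)$ analysis works because there $N=3$ is odd, hence $N=1$ in $\ob{F}_2$. For $m\ge 3$ one has $\GL_m(\ob{F}_2)=\operatorname{SL}_m(\ob{F}_2)$, which is simple and therefore perfect, so $[\matrixGroup,\matrixGroup]=\matrixGroup$; since transvections give involutions, $N$ is even, whence $N=0$ in $\ob{F}_2$ and $[h,g]^N=1$ for every $h\in V$. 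Thus the product $e(x)=[x,g_1]^N[x,g_2]^N$ collapses to the identity on $V$ and cannot be a retraction. You flag $(2,2)$ as ``the main case'' but offer no substitute for $m\ge 3$, so as a proof of the proposition in its stated generality the argument is incomplete. If you only need the proposition for the cases the paper actually uses, say so; otherwise you will need a different mechanism (or the paper's near-ring citation) in the $q=2$, $m\ge 3$ regime.
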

\begin{proof}
   The group $V$ is a minimal normal subgroup of $G$, and its centralizer
   $C_G (V)$ is equal to $V$. Hence $V$ is the unique minimal normal subgroup
   of $G$. Taking $R$ to be the inner automorphism near-ring $I(G)$, \cite[Theorem~4.1(2)]{FK:UPOC}
   yields
   an $e \in R$ with $e(h) = h$ and $e(g) = 0$ for all
   $h \in V$ and $g \in G \setminus V$.
   Since $G$ is a finite group, this $e$ is induced by a word as in the
   statement of the Theorem.
\end{proof}
   We observe that  a suitable $e$  was also constructed in \cite[Proposition~7.14]{AI:PIIE}.

\begin{thm}  \label{resPolEqvToPolEqv}
  Let $m \in \N$ and let $q$ be a prime power with $(m,q) \neq (1,2)$.
  Then the restricted polynomial equivalence problem over
  $\matrixRing$ is polytime-reducible to
  $\PolEqv (\Holomorph)$.
\end{thm}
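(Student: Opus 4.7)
The plan is to encode a restricted polynomial expression $p$ over $\matrixRing$ as a group word $w$ over $G=\Holomorph$ such that $w$ evaluates to $\VecTwo{0}{I}$ on every assignment from $G$ if and only if $p$ evaluates to $0$ on every assignment from $\matrixGroup$. The encoding rests on two observations: first, multiplication in the matrix part of $G$ already performs matrix multiplication, so monomials are easy to represent; second, the vector part $V=\{\VecTwo{v}{I}\mid v\in\vectorSpace\}$ is an abelian normal subgroup, which makes it the natural place to carry out the addition needed to collect the monomials of $p$.

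Write $p=\sum_{i=1}^{k} m_i$ as a sum of restricted monomials. First I would replace each matrix variable $x_j$ appearing in $p$ by a group variable $X_j$ ranging over $G$, and each invertible matrix constant $M\in\matrixGroup$ by the group element $\VecTwo{0}{M}\in G$. By~\eqref{eq:product}, for every monomial $m_i$ and every assignment $X_j\mapsto \VecTwo{v_j}{A_j}\in G$, the resulting group word $g_i(X_1,\dots,X_n)$ evaluates to an element of $G$ whose matrix part is exactly $m_i(A_1,\dots,A_n)$.

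Next I would introduce one further group variable $Y$ and use the word $e$ from Proposition~\ref{pro:idem} to project $Y$ into $V$. Setting
\[
w(X_1,\dots,X_n,Y) \;:=\; \prod_{i=1}^{k} g_i\cdot e(Y)\cdot g_i^{-1},
\]
a direct computation with the semidirect-product multiplication shows that for every $g=\VecTwo{u}{M}\in G$ and every $\VecTwo{v}{I}\in V$ one has $g\VecTwo{v}{I}g^{-1}=\VecTwo{Mv}{I}$. Applying this to $g_i$ and to $e(Y)$, which lies in $V$ by Proposition~\ref{pro:idem}, and using that $V\cong\vectorSpace$ is abelian so that the product over $i$ adds the vector parts, one obtains
\[
w(\vb{X},Y) \;=\; \VecTwo{\,p(A_1,\dots,A_n)\cdot v\,}{I},
\]
where $A_j$ is the matrix part of $X_j$ and $v$ is the vector part of $e(Y)$.

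Correctness then drops out: if $p\equiv 0$ on $\matrixGroup^n$, the right-hand side equals $\VecTwo{0}{I}$ for every assignment, so $w$ is identically $1$; conversely, if $w$ is identically $1$, specializing $X_j:=\VecTwo{0}{A_j}$ for arbitrary $A_j\in\matrixGroup$ and $Y:=\VecTwo{v}{I}$ for arbitrary $v\in\vectorSpace$ (so that $e(Y)=Y$) yields $p(A_1,\dots,A_n)\,v=0$ for all $v$, hence $p(A_1,\dots,A_n)=0$. The length bound is routine: $|g_i|\le 2|m_i|+1$, and $|e|$ depends only on $(m,q)$, so $|w|=O(|p|+k|e|)$ and $w$ can be written down in time polynomial in $|p|$. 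I do not anticipate any essential obstacle; the only subtle point is the insertion of $e(Y)$, needed because $Y$ ranges over all of $G$ but the additive arithmetic we wish to exploit is available only inside $V$.
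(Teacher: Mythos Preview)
Your proposal is correct and follows essentially the same route as the paper: both encode each monomial by replacing matrix constants $M$ with $\VecTwo{0}{M}$, conjugate the projected element $e(Y)\in V$ by the resulting group word to realize the matrix action on $V$, and multiply the conjugates to sum the monomials in the abelian normal subgroup, arriving at $w=\prod_i g_i\,e(Y)\,g_i^{-1}$ with value $\VecTwo{p(\vb{A})\,v}{I}$. The only point the paper spells out that you leave implicit is how to write $g_i^{-1}$ as a word in $(G\cup X)^\ast$ (namely via $x\mapsto x^{|G|-1}$), which gives the constant-factor blowup behind the linear length bound.
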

\begin{proof}
  For this reduction, we provide
  an algorithm that takes a restricted polynomial
  expression $p (x_1, \ldots, x_n)$
over $\matrixRing$ and returns
a group word $w(x_1, \ldots, x_{n+1}) \in   (\Holomorph \cup \{x_1, \ldots, x_{n+1}\})^*$ of length linear in $\length{p}$ such that 
$\forall \vb{g}\in \matrixGroup^n : p (\vb{g}) = 0$ if and only if
$\forall \vb{y}\in \Holomorph^{n+1}: w(\vb{y}) =  1$.
Moreover, this algorithm has running time polynomial in $\length{p}$.

  For a restricted monomial
  $\mu \in (\matrixGroup \cup \{x_1,\ldots,x_{n}\})^*$,
  we construct a word
  $\phi (\mu) \in (\Holomorph \cup \{x_1, \ldots, x_n\})^*$
  by replacing every letter $g \in \matrixGroup$ by
  the element $\VecTwo{0}{g}$ of $\Holomorph$.
  We claim that for all $u \in \vectorSpace$, for all
  $g_1, \ldots, g_n \in \matrixGroup$ and for all
  $v_1, \ldots, v_n \in \vectorSpace$, we have
  \begin{equation} \label{eq:phi}
    \VecTwo{\mu (g_1, \ldots, g_n) \cdot u}{I} =
    \phi (\mu) (\VecTwo{v_1}{g_1}, \ldots, \VecTwo{v_n}{g_n})
     \ast \VecTwo{u}{I} \ast 
     (\phi (\mu) (\VecTwo{v_1}{g_1}, \ldots, \VecTwo{v_n}{g_n}))^{-1},
  \end{equation}
  where $\ast$ is the multiplication of $\Holomorph$.
  We observe that from the definition of the multiplication in
  the holomorph, we obtain
  $\VecTwo{v}{g}^{-1} = \VecTwo{-g^{-1} \cdot v}{g^{-1}}$ and
  therefore
  \begin{multline} \label{eq:conjugation}
  \VecTwo{v}{g} \ast \VecTwo{u}{I} \ast \VecTwo{v}{g}^{-1}
   =
   \VecTwo{v}{g} \ast \VecTwo{u}{I} \ast \VecTwo{-g^{-1} \cdot v}{g^{-1}}
   \\ =
   \VecTwo{v+ g\cdot u}{g} \ast \VecTwo{-g^{-1} \cdot v}{g^{-1}}
   =
   \VecTwo{v + g \cdot u - v}{I} = \VecTwo{g \cdot u}{I}.
  \end{multline}
  Now in order to prove~\eqref{eq:phi}, we proceed by induction on
  the length of $\mu$. If $\mu = g$, then~\eqref{eq:conjugation}
  yields
  $\VecTwo{g \cdot u}{I} = \VecTwo{0}{g} \ast \VecTwo{u}{I} \ast
  \VecTwo{0}{g}^{-1} = \phi(g) \ast \VecTwo{u}{I} \ast \phi(g)^{-1}$.
  If $\mu = x_i$, then
  $\VecTwo{g_i \cdot u}{I} = \VecTwo{v_i}{g_i} \ast \VecTwo{u}{I} \ast
  \VecTwo{v_i}{g_i}^{-1} = \phi(\mu) (g_1, \ldots, g_n)
  \ast \VecTwo{u}{I} \ast (\phi(\mu) (g_1, \ldots, g_n))^{-1}$.
  For the induction step, we let $\mu = \mu_1 \ldots \mu_k$
  let $\alpha := \mu_1 (g_1, \ldots, g_n)$,
  $\beta := \mu_2 \ldots \mu_k (g_1, \ldots, g_n)$,
  $\phi_\alpha := \phi (\mu_1) (\VecTwo{v_1}{g_1}, \ldots,
  \VecTwo{v_n}{g_n})$ and
  $\phi_\beta := (\phi (\mu_2 \ldots \mu_k) (\VecTwo{v_1}{g_1}, \ldots,
  \VecTwo{v_n}{g_n})$.
  Then from what we have proved for words of length~$1$ and the
  induction hypothesis, we obtain
   \begin{multline*}
  \VecTwo{(\alpha \beta) \cdot u}{I} =
  \VecTwo{\alpha \cdot (\beta  \cdot u)}{I} =
  \phi_{\alpha} \ast \VecTwo{\beta \cdot u}{I} \ast \phi_{\alpha}^{-1} =
  \phi_{\alpha} \phi_{\beta} \VecTwo{u}{I} \phi_{\beta}^{-1} \phi_\alpha^{-1}
  \\ =
  \phi (\mu) (\VecTwo{v_1}{g_1}, \ldots, \VecTwo{v_n}{g_n})
     \ast \VecTwo{u}{I} \ast 
     (\phi (\mu) (\VecTwo{v_1}{g_1}, \ldots, \VecTwo{v_n}{g_n}))^{-1}.
   \end{multline*}
  This completes the proof of~\eqref{eq:phi}.

  We will now produce the required word $w(x_1,\ldots,x_{n+1})$
  from $p(x_1, \ldots, x_n)$.
  Let $V := \{ \VecTwo{u}{I} \mid u \in \vectorSpace \}$.
  By Proposition \ref{pro:idem}, there exists a
  word $e \in (\Holomorph \cup \{x\})^*$ such that
  the unary mapping induced by $e$ is idempotent with
  range $V$.
  
  We write $p = \sum_{i=1}^l \mu_i$ as sum of monomials and
  define $w$ by
  \[
  w (x_1, \ldots, x_{n+1}) =
  \prod_{i=1}^l \phi (\mu_i) e (x_{n+1}) (\phi (\mu_i))^{-1}.
  \]
  Here, for a word
  $\gamma \in (\Holomorph \cup \{x_1, \ldots, x_n\})^*$,
  the inverse $\gamma^{-1}$ is defined by
  $(x_i)^{-1} := (x_i^{|\Holomorph| - 1})$, $(h)^{-1} := (h^{-1})$ for
  words of length one and then recursively by
  $(\beta \delta)^{-1} = \delta^{-1} \beta^{-1}$. 
  
  The length of $w$ can then be bounded as follows:
  \[
     \begin{split}
       \length{w} & \le \sum_{i=1}^l (\length{e} +
       (1 + |\Holomorph| - 1) \length{\mu_i})
       \\
       & = l \length{e} + |\Holomorph| |p| \le
       (\length{e} + |\Holomorph|) |p|.
     \end{split}
  \]
  Now for all $u \in \vectorSpace$, we have
  \begin{equation} \label{eq:wp}
  \begin{split}
    w (\VecTwo{v_1}{g_1}, \ldots, \VecTwo{v_n}{g_n},
    \VecTwo{u}{I}) & =
    \prod_{i=1}^l \phi (\mu_i) (\VecTwo{\vb{v}}{\vb{g}})
    \VecTwo{u}{I} (\phi (\mu_i) (\VecTwo{\vb{v}}{\vb{g}}))^{-1}
    \\
    &=
    \prod_{i=1}^l \VecTwo{\mu_i (\vb{g}) \cdot u}{I} \\
    &=
    \VecTwo{\sum_{i=1}^l \mu_i (\vb{g}) \cdot u}{I} \\
    &=
     \VecTwo{p (\vb{g}) \cdot u}{I}. 
  \end{split}
  \end{equation}
  Now assume that $p(\vb{g}) = 0$ for all
  $\vb{g} \in \matrixGroup^n$, and let
  $\VecTwo{v_1}{h_1}, \ldots, \VecTwo{v_{n+1}}{h_{n+1}}
  \in \Holomorph$.
  Let $\VecTwo{u}{I} := e (\VecTwo{v_{n+1}}{h_{n+1}})$.
  Since $e$ is idempotent, we then have
  $w (\VecTwo{v_1}{g_1}, \ldots, \VecTwo{v_n}{g_n},
          \VecTwo{v_{n+1}}{g_{n+1}}) =
   w (\VecTwo{v_1}{g_1}, \ldots, \VecTwo{v_n}{g_n},\VecTwo{u}{I})
   =       
   \VecTwo{p (\vb{g}) \cdot u}{I} = \VecTwo{0}{I} = 1$.

   Now assume that there is $\vb{g} \in \matrixGroup^n$
   with $p(\vb{g}) \neq 0$. Then
     there is $u \in \vectorSpace$ with
     $p(\vb{g}) \cdot u \neq 0$,  and hence
     by~\eqref{eq:wp}, $w(\vb{g}, \VecTwo{u}{I}) \neq
     \VecTwo{0}{I}$.
 \end{proof}

\section{Reduction of polynomial equivalence to polynomial
         satisfiability for holomorphs}
In this section, we will see that $\textsc{PolSat} (S_4)$ and
$\textsc{PolEqv} (S_4)$ are -- in some sense -- of equal computational
complexity.
By  Theorems~\ref{thm:polsattorespoleqv} and~\ref{resPolEqvToPolEqv},
$\textsc{PolSat} (S_4)$ can be reduced to the complement of  $\textsc{PolEqv} (S_4)$.
For the converse reduction, 
we note that for any group, there is an obvious Turing reduction of
$\PolEqv (G)$ to $\PolSat(G)$ : 
$p = 1$ is valid if for all $g \in G \setminus \{1\}$,
$p = g$ is not satisfiable. Hence for every group $G$,
$\PolSat (G) \in \PP$ implies $\PolEqv (G) \in \PP$.
If $G = \Holomorph$, we even have a polytime reduction:
\begin{thm} \label{thm:poleqvtopolsat}
  Let $m \in \N$, let $q$ be a prime power with $(m,q) \neq (1,2)$, and let
  $G := \Holomorph$. Then
  $\textsc{coPolEqv} (G)$ is polytime-reducible to
  $\PolSat (G)$.
\end{thm}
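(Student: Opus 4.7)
The plan is to use the projection $e$ from Proposition~\ref{pro:idem} to turn the question ``$\exists \vb{x} : w(\vb{x}) \neq 1$'' into a satisfiability instance. Given $w(x_1, \ldots, x_n) \in (G \cup X)^*$, we introduce two fresh variables $v$ and $h$ and build a word $w'(\vb{x}, v, h)$ of length $O(\length{w})$ (with a constant depending only on $G$) such that $w'$ is satisfiable if and only if $w$ is not identically $1$. The central idea is to construct an auxiliary word $u(\vb{x}, v)$ taking values in $V$ that can be made nontrivial by some assignment precisely when $w$ is not identically $1$; the transitivity of the $\matrixGroup$-action on $\vectorSpace \setminus \{0\}$ then lets a further variable $h$ conjugate any nontrivial value of $u$ into a fixed element $v_0 \in V \setminus \{1\}$.

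Concretely, fix a nonzero $s_0 \in \vectorSpace$, set $v_0 := \VecTwo{s_0}{I}$ (used as a constant in the word), write variable inverses as $x^{|G|-1}$, and put
\[
u(\vb{x}, v) := e(w(\vb{x})) \cdot w(\vb{x})\, e(v)\, w(\vb{x})^{-1}\, e(v)^{-1}
\]
and
\[
w'(\vb{x}, v, h) := h\, u(\vb{x}, v)\, h^{-1}\, v_0^{-1}.
\]
Using~\eqref{eq:product} and writing $w(\vb{x}) = \VecTwo{t}{M}$, $e(v) = \VecTwo{s}{I}$, and $e(w(\vb{x})) = \VecTwo{r}{I}$ (both valid because $e$ takes values in $V$), the commutator simplifies to $\VecTwo{(M-I)s}{I}$, and so
\[
u(\vb{x}, v) = \VecTwo{r + (M-I)s}{I} \in V.
\]

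The verification then splits into two directions. If $w \equiv 1$, then for every $\vb{x}$ we have $M = I$ and $r = 0$, so $u \equiv 1$, hence $h u h^{-1} = 1 \neq v_0$ and $w'$ is unsatisfiable. Conversely, suppose $w(\vb{x}_0) \neq 1$. Either $r \neq 0$ (then set $v := 1$, giving $e(v) = 1$ and $u(\vb{x}_0, v) = \VecTwo{r}{I} \neq 1$), or $r = 0$ and $M \neq I$ (note that $r = 0$ together with $M = I$ would force $w(\vb{x}_0) = e(w(\vb{x}_0)) = 1$; so $M \neq I$, and some $s$ gives $(M-I)s \neq 0$, whereupon setting $v := \VecTwo{s}{I} \in V$ yields $e(v) = v$ and $u(\vb{x}_0, v) \neq 1$). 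Either way some assignment makes $u(\vb{x}_0, v)$ a nontrivial element of $V$. Since $(m,q) \neq (1,2)$, $\matrixGroup$ acts transitively on $\vectorSpace \setminus \{0\}$, so we can pick $H \in \matrixGroup$ mapping the translation part of $u(\vb{x}_0, v)$ to $s_0$ and set $h := \VecTwo{0}{H}$; a direct computation gives $h u(\vb{x}_0, v) h^{-1} = v_0$, hence $w'(\vb{x}_0, v, h) = 1$.

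Finally, $\length{w'} = O(\length{w})$ since $\length{e}$ and $|G|$ depend only on $G$, and the construction is carried out by straightforward string manipulation in polynomial time. The main obstacle is that the projection $e$ alone cannot detect nontriviality of elements of $G \setminus V$; this is overcome by combining $e(w)$ with the commutator $[w, e(v)]$, which supplies a nontrivial element of $V$ whenever $w$ has a nontrivial matrix component.
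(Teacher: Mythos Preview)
Your proof is correct and takes a somewhat different route from the paper's. The paper invokes a stronger unary polynomial $e$ (obtained from \cite[Proposition~7.16]{AI:PIIE}) satisfying $e|_V = \mathrm{id}_V$ and $e(g) = a$ for a \emph{fixed} $a \in V \setminus \{1\}$ whenever $g \notin V$; with this $e$ one has $e(w(\vb{x})) \neq 1$ if and only if $w(\vb{x}) \neq 1$, so a single fresh variable $x_{n+1}$ for conjugation suffices and no commutator term is needed. You instead rely only on the weaker projection of Proposition~\ref{pro:idem} (whose statement allows elements of $G \setminus V$ to be sent to $1$) and compensate by adjoining the commutator $[w, e(v)]$, which detects a nontrivial matrix component $M \neq I$. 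Your argument costs an extra variable and a few more lines, but it has the virtue of being self-contained: it uses only results already established in the paper rather than importing a further external reference. One small remark: the transitivity of $\matrixGroup$ on $\vectorSpace \setminus \{0\}$ holds for all $(m,q)$, so the hypothesis $(m,q) \neq (1,2)$ enters your argument only through Proposition~\ref{pro:idem}, not at the transitivity step.
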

\begin{proof}
  Let $H := \{\VecTwo{v}{I} \mid v \in \vectorSpace\}$, and
  let $a \in H \setminus \{ \VecTwo{0}{I} \}$.
  Then $H$ is the unique minimal normal subgroup of $G$,
  and its centralizer $C_G(H)$ is equal to $H$. 
  Thus $H$ is a homogeneous ideal of the group $G$ in the sense
  of \cite[Proposition~7.16]{AI:PIIE}. Now this Proposition yields
  a unary polynomial function $f$ of $G$ such that
  $f|_H = \mathrm{id}_H$ and $f(x) = a$ for $x \in G \setminus H$.
  There is a group word $e \in (G \cup \{x\})^*$ that induces the function
  $f$. The reduction is now as follows: in order to check whether for
  a given word $w \in (G \cup \{x_1, \ldots, x_n\})^*$, the property
  $\neg \forall \vb{x} : w (\vb{x}) = \VecTwo{0}{I}$ holds,
  we compute a new word $x_{n+1} \, e(w  (\vb{x})) \, x_{n+1}^{|\Holomorph|-1}$;
  the length of this word is bounded linearly in the length of $w$.
  Using a procedure for $\PolSat (G)$, we
  check whether
  \begin{equation} \label{eq:wex}
    x_{n+1} \, e(w (\vb{x})) \,  x_{n+1}^{|\Holomorph|-1} = a
  \end{equation}  
  is solvable.
  Clearly, if $x_{n+1} \, e(w (\vb{x})) \, x_{n+1}^{-1} = a$, then
  $e (w (\vb{x})) \neq \VecTwo{0}{I}$, and thus $w(\vb{x}) \neq
  \VecTwo{0}{I}$. For the other direction, we
  assume that $w (\vb{x}) \neq \VecTwo{0}{I}$. If $w (\vb{x}) \not\in H$,
  then $e (w (\vb{x})) = a$, and so setting $x_{n+1} := \VecTwo{0}{I}$
  yields a solution of~\eqref{eq:wex}. If $w (\vb{x}) \in H$, then
  there is $A \in \matrixGroup$ with
  $\VecTwo{0}{A} w (\vb{x}) \VecTwo{0}{A}^{-1}  = a$, and so
  setting $x_{n+1} := \VecTwo{0}{A}$, we obtain a solution of~\eqref{eq:wex}.
  \end{proof}

\section{An arithmetical problem over finite fields equivalent to
  $\PolEqv(S_4)$}
It follows from Theorems~\ref{thm:poleqvtopolsat},
\ref{thm:polsattorespoleqv}~and~\ref{resPolEqvToPolEqv} that
$\PolEqv (S_4)$ is polytime-equivalent to
the the restricted equivalence problem over $\operatorname{Mat}_2(\Z_2)$.
In this section, we provide an equivalent arithmetical problem
over the field $\ob{F}_4$. Here, we see $\ob{F}_4$ as a subring of
$\Mat_2 (\Z_2)$ in the following way: we set $\alpha := \left(\begin{smallmatrix} 0 & 1 \\ 1 & 1\end{smallmatrix}\right)$ and we define
  $\ob{F}_4$ to be the subring $\{0,\alpha,\alpha^2, \alpha^3 = I\}$
  of $\operatorname{Mat}_2(\Z_2)$. The following facts can be verified
  by straightforward calculation:
  \begin{lem} \label{lem:s3}
    Let $\alpha := \left(\begin{smallmatrix} 0 & 1 \\ 1 & 1\end{smallmatrix}\right)$, $\sigma := \left(\begin{smallmatrix} 0 & 1 \\ 1 & 0\end{smallmatrix}\right)$, and let $\GFF = \{0,\alpha,\alpha^2, \alpha^3=I\}$.
        Then we have
        \begin{enumerate}
           \item $\sigma^2 = \alpha^3 = I$,
          \item 
            $\{\sigma \cdot a \mid a \in \GFF \} \cap \GFF  = \{0\}$,
          \item for all $a \in \GFF$ : $\sigma \cdot a \cdot \sigma = a^2$.
        \end{enumerate}
  \end{lem}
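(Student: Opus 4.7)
The plan is a direct computation in $\Mat_2(\Z_2)$: each of the three items reduces to a handful of $2 \times 2$ matrix multiplications, and the only care needed is with arithmetic modulo $2$. First I would enumerate $\GFF$ explicitly. A single multiplication gives $\alpha^2 = \MatTwo{1}{1}{1}{0}$, and a second one yields $\alpha^3 = \alpha \cdot \alpha^2 = I$, which already establishes the second half of item (1). For the first half, the computation $\sigma^2 = \MatTwo{0}{1}{1}{0}^2 = I$ is immediate.

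For item (2), I would form the set $\sigma \cdot \GFF = \{0, \sigma, \sigma\alpha, \sigma\alpha^2\}$ and compare it with $\GFF = \{0, \alpha, \alpha^2, I\}$. The three nonzero products are $\sigma = \MatTwo{0}{1}{1}{0}$, $\sigma\alpha = \MatTwo{1}{1}{0}{1}$, and $\sigma\alpha^2 = \MatTwo{1}{0}{1}{1}$, none of which coincides with $\alpha$, $\alpha^2$, or $I$; so the intersection is exactly $\{0\}$.

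For item (3), since $\sigma^2 = I$ from part (1), conjugation $a \mapsto \sigma a \sigma$ is a ring endomorphism of $\Mat_2(\Z_2)$ that restricts to a map on the subring $\GFF$ once I verify it on a generator. It fixes $0$ and $I$ trivially, so it suffices to check $\sigma \alpha \sigma = \alpha^2$ by a single multiplication. The remaining case is then automatic: inserting $\sigma^2 = I$ gives $\sigma \alpha^2 \sigma = (\sigma \alpha \sigma)(\sigma \alpha \sigma) = (\alpha^2)^2 = \alpha^4 = \alpha$, and indeed $(\alpha^2)^2 = \alpha$, matching the Frobenius $a \mapsto a^2$ on $\GFF$.

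There is no real obstacle here beyond bookkeeping of $\Z_2$-arithmetic; the content of the lemma is that the subring $\GFF$ of $\Mat_2(\Z_2)$ together with $\sigma$ generates a copy of $S_3 \cong \GL_2(\Z_2)$ acting on $\GFF$ by the Galois group, which is precisely what the three items encode and what will be exploited in the subsequent sections.
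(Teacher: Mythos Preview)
Your proposal is correct and matches the paper's approach exactly: the paper offers no proof beyond the remark that ``the following facts can be verified by straightforward calculation,'' and your direct matrix computations in $\Mat_2(\Z_2)$ are precisely that verification.
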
      
We will now use the field operations of $\Z_3$ and of $\ob{F}_4$, and we write
$+,-,\cdot$ for the operations in $\Z_3$. The addition and subtraction
in $\ob{F}_4$ will be denoted by $\oplus$ and $\ominus$. Of course
$\oplus = \ominus$, but using both symbols will improve readability.
We say that a polynomial $f \in \Z_3 [x_1, \ldots, x_n]$ is \emph{expanded} if
there is an $I \subseteq \{0,1,2\}^n$ such that 
\[
f = \sum_{(e_1, \ldots, e_n) \in I} c_{e_1, \ldots, e_n} x_1^{e_1} \cdots x_n^{e_n}
\]
with $c_{\vb{e}} \in \Z_3 \setminus \{0\}$.
Now the arithmetical problem that we consider is the following:
\begin{prb}[An arithmetic equivalence problem over $\ob{F}_4$] \label{prb:P1} \mbox{} \\
  \textbf{Given:} Expanded polynomials $p_1, \ldots, p_k \in \Z_3 [x_1, \ldots, x_n]$
  with $k,n \in \N$. \\
  \textbf{Asked:} Do we have
  \[
   \forall \vb{a} \in \{-1,1\}^n \, : \, \sum_{i=1}^k \alpha^{p_i (\vb{a})} = 0?
  \]
  Here, the sum is computed in $\ob{F}_4$, hence
  $\sum_{i=1}^k \alpha^{p_i (\vb{a})} = \alpha^{p_1 (\vb{a})} \oplus \cdots \oplus \alpha^{p_k (\vb{a})}$.
\end{prb}
The goal of this section is to show that Problem~\ref{prb:P1} is polytime-equivalent
to $\PolEqv (S_4)$. The next lemma provides a way to compute the product of two matrices in $\Mat_2 (\Z_2)$
inside $\ob{F}_4$:
\begin{lem}\label{lem:rhom}
  On $\ob{F}_4^2$, we define a multiplication $\star$ by
  \[ \VecTwo{a_1}{a_2} \star \VecTwo{b_1}{b_2} := \VecTwoBig{a_1b_1 \, \oplus  \, a_2^2b_2}{a_1^2b_2 \, \oplus \, a_2b_1}
  \text{ for } a_1, a_2, b_1, b_2 \in \GFF,
  \]
  and a mapping
  \[
  \rho: \GFF^2 \rightarrow \Mat_2 (\Z_2), \,  \rho(\VecTwo{a_1}{a_2}) = a_1 \, \oplus \, \sigma \cdot  a_2.
  \]
  Then $\rho$ is an isomorphism from $(\GFF^2, \oplus, \star)$ to
  $(\Mat_2 (\Z_2), +, \cdot)$.
\end{lem}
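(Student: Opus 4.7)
The plan is to verify the three parts of being a ring isomorphism: additivity, multiplicativity, and bijectivity, using the facts collected in Lemma~\ref{lem:s3}.

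First I would observe that $\rho$ is clearly additive in $\GFF^2$, since
$\rho\bigl((a_1,a_2)\oplus(b_1,b_2)\bigr) = (a_1\oplus b_1) \oplus \sigma(a_2\oplus b_2) = \rho(a_1,a_2) + \rho(b_1,b_2)$,
using that $\oplus$ is just addition in $\Mat_2(\Z_2)$ restricted to $\GFF$, and that left-multiplication by $\sigma$ distributes over addition.

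For bijectivity, I would use Lemma~\ref{lem:s3}(2), which says $\sigma\GFF \cap \GFF = \{0\}$. If $\rho(a_1,a_2) = \rho(b_1,b_2)$, then $a_1 \ominus b_1 = \sigma(b_2 \ominus a_2)$ lies in $\GFF \cap \sigma\GFF = \{0\}$, so $a_1 = b_1$, and then $\sigma a_2 = \sigma b_2$ gives $a_2 = b_2$ (left-multiplication by the invertible $\sigma$ is injective). Hence $\rho$ is injective, and since $|\GFF^2| = 16 = |\Mat_2(\Z_2)|$, it is bijective.

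The core computation is multiplicativity. From Lemma~\ref{lem:s3}(3) we have $\sigma a \sigma = a^2$ for all $a \in \GFF$. Multiplying by $\sigma$ on the left and using $\sigma^2 = I$ yields the commutation rule $a\sigma = \sigma a^2$. Applying this to expand the product gives
\[
(a_1 \oplus \sigma a_2)(b_1 \oplus \sigma b_2)
 = a_1 b_1 \oplus a_1 \sigma b_2 \oplus \sigma a_2 b_1 \oplus \sigma a_2 \sigma b_2
 = a_1 b_1 \oplus \sigma a_1^2 b_2 \oplus \sigma a_2 b_1 \oplus a_2^2 b_2,
\]
which rearranges to $(a_1 b_1 \oplus a_2^2 b_2) \oplus \sigma (a_1^2 b_2 \oplus a_2 b_1) = \rho\bigl((a_1,a_2)\star(b_1,b_2)\bigr)$.

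The only step requiring care is the commutation rule $a\sigma = \sigma a^2$; once that is in hand, multiplicativity is a direct expansion, and additivity and bijectivity are essentially immediate from Lemma~\ref{lem:s3}. I expect no serious obstacle: the lemma is really a repackaging of the well-known fact that $\Mat_2(\Z_2)$ is a crossed product of $\GFF$ by the Galois automorphism $a \mapsto a^2$, with $\sigma$ playing the role of the Frobenius-implementing element.
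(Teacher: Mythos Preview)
Your proof is correct and follows essentially the same approach as the paper: injectivity via Lemma~\ref{lem:s3}(2), bijectivity by cardinality, and multiplicativity by expanding with the commutation rule derived from Lemma~\ref{lem:s3}(3). The only cosmetic difference is that the paper computes $\rho(\vb{a}\star\vb{b})$ and simplifies it to $\rho(\vb{a})\cdot\rho(\vb{b})$, whereas you expand $\rho(\vb{a})\cdot\rho(\vb{b})$ and recognize it as $\rho(\vb{a}\star\vb{b})$; the key identity $a\sigma=\sigma a^2$ (equivalently $\sigma a=a^2\sigma$) is the same in both.
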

\begin{proof}
  To see that $\rho$ is injective, we note that for $\vb{a}, \vb{b} \in \GFF^2$ with $\rho(\vb{a}) = \rho(\vb{b})$, we have $0 = \rho(\vb{a}) \ominus \rho(\vb{b}) = (a_1 \ominus b_1) \oplus \sigma (a_2 \ominus b_2)$.
  From Lemma~\ref{lem:s3} , we get $0 = a_1 \ominus b_1 = a_2 \ominus b_2$ and thus $\vb{a} = \vb{b}$.
  Since $|\GFF^2| = |\Mat_2 (\Z_2)|$,  $\rho$ is a bijection.
  One immediately verifies that $\rho$ is a homomorphism from $(\GFF^2, \oplus)$
  to $(\Mat_2 (\Z_2), +)$. We will now show that $\rho$ is also a multiplicative
  homomorphism. To this end, we fix $a_1, a_2, b_1, b_2 \in \ob{F}_4$ and compute,
  using that $\sigma^2 = I$ and $\sigma a = a^2 \sigma$ for all $a \in \GFF$:
  \[
\begin{split}
  \rho (\VecTwo{a_1}{a_2} \star \VecTwo{b_1}{b_2}) &=
  \rho (\VecTwo{a_1 b_1 \oplus a_2^2 b_2}{a_1^2 b_2 \oplus a_2 b_1}) \\
  &= a_1 b_1 \oplus a_2^2 b_2 \oplus \sigma (a_1^2 b_2 \oplus a_2 b_1) \\
  &= a_1 b_1 \oplus a_2^2 b_2 \oplus \sigma a_1^2 b_2 \oplus \sigma a_2 b_1 \\
  &= a_1 b_1 \oplus \sigma a_2 \sigma b_2 \oplus  a_1 \sigma b_2 \oplus \sigma a_2 b_1 \\
    &= a_1 (b_1 \oplus \sigma b_2) \oplus \sigma a_2 (\sigma b_2 \oplus b_1) \\
    &= (a_1 \oplus \sigma a_2)(b_1 \oplus \sigma b_2) \\
    &= \rho (\VecTwo{a_1}{a_2}) \cdot \rho (\VecTwo{b_1}{b_2}).
\end{split}
\]
Hence $\rho$ is an isomorphism from $(\GFF^2, \oplus, \star)$ to
  $(\Mat_2 (\Z_2), +, \cdot)$. 
\end{proof}

Next, we evaluate certain products of $n$ factors in $\GFF^2$. A crucial observation
is that in $\GFF$, the sum of two distinct elements of the set
$\GFF \setminus \{0\} = \{\alpha^0 = I_2, \alpha, \alpha^2\}$ is always equal to
the third element of $\GFF \setminus \{0\}$ distinct from both summands, whereas
$a \oplus a = 0$ for all $a \in \GFF$.
\begin{lem}
  Let $f : \{-1,1\} \times \Z_3\rightarrow \GFF^2$ be defined by
  \[
  f (s, t) = \VecTwo{\alpha^{s+t} \, \oplus \, \alpha^{2+t}}{\alpha^{s+t} \, \oplus \, \alpha^{1+t}}.
  \]
Then $\operatorname{im} (\rho \circ f) = \GL_2(\Z_2)$, and for all $n \in\N, (s_1, t_1), \dots, (s_n, t_n) \in  \{-1,1\} \times \Z_3$, 
we have
  \begin{equation} \label{eq:fst}
    f(s_1, t_1)  \star \dots \star f(s_n, t_n) = f(s_1\cdots s_n, \, \sum_{i = 1}^n ( t_i \! \! \prod_{j=i+1}^n \! \! s_j)).
  \end{equation}  
\end{lem}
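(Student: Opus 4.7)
The plan is to handle the two assertions separately: first the image statement, which reduces to a short tabulation, then the product formula~\eqref{eq:fst}, which I would derive by translating it through the isomorphism $\rho$ and inducting on $n$.

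For the first assertion, I would evaluate $f(s,t)$ at all six pairs $(s,t) \in \{-1,1\} \times \Z_3$. Using $\alpha^3 = I$ together with the characteristic-$2$ identity $I \oplus \alpha \oplus \alpha^2 = 0$ in $\GFF$, each coordinate of $f(s,t)$ collapses to $0$ or to a power of $\alpha$, and one finds $f(1,t) = \VecTwo{\alpha^t}{0}$ and $f(-1,t) = \VecTwo{0}{\alpha^t}$. Applying $\rho$ then gives $(\rho \circ f)(1,t) = \alpha^t$ and $(\rho \circ f)(-1,t) = \sigma \alpha^t$. By Lemma~\ref{lem:rhom}, $\rho$ is injective, so these six matrices are pairwise distinct; since each is invertible and $|\GL_2(\Z_2)| = 6$, they exhaust $\GL_2(\Z_2)$.

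For the second assertion, applying $\rho$ to both sides of~\eqref{eq:fst} and using that $\rho$ is a ring isomorphism, the identity is equivalent to the multiplicative statement
\[
(\rho \circ f)(s_1,t_1) \cdots (\rho \circ f)(s_n,t_n) = (\rho \circ f)\Bigl(\textstyle\prod_i s_i,\; \sum_i t_i \prod_{j>i} s_j\Bigr)
\]
in $\GL_2(\Z_2)$. Using the explicit description of $\rho \circ f$ from part one, together with the commutation rule $\alpha^t \sigma = \sigma \alpha^{-t}$ (which follows from $\sigma^2 = I$ and $\sigma a \sigma = a^2$ in Lemma~\ref{lem:s3}), I would verify the base case $n=2$ by splitting into the four sign combinations of $(s_1,s_2)$; each case is a one-line manipulation of $\alpha$-exponents that recovers $\sigma^{\epsilon} \alpha^{t_1 s_2 + t_2}$ with $\epsilon$ determined by $s_1 s_2$. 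The general $n$ then follows by induction, using associativity of $\star$ from Lemma~\ref{lem:rhom} and the recursion
\[
\sum_{i=1}^{n} t_i \!\!\prod_{j=i+1}^{n}\!\! s_j \;=\; \Bigl(\sum_{i=1}^{n-1} t_i \!\!\prod_{j=i+1}^{n-1}\!\! s_j\Bigr) s_n + t_n.
\]

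Conceptually the lemma says that $\rho \circ f$ realises the isomorphism $\Z_3 \rtimes \{-1,1\} \cong S_3 \cong \GL_2(\Z_2)$, where the group operation on the left is $(s_1,t_1)\cdot(s_2,t_2) = (s_1 s_2,\, t_1 s_2 + t_2)$. No real obstacle is expected; the only care needed is in the bookkeeping, since each $s_i$ plays a double role (as an element of $\{-1,1\}$ under multiplication and as $\pm 1 \in \Z_3$ under addition), and one must consistently match multiplication by $s_i = -1$ with negation in $\Z_3$ and with conjugation by $\sigma$ in $\GL_2(\Z_2)$.
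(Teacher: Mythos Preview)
Your proposal is correct and follows essentially the same approach as the paper: both simplify $f(1,t)=\VecTwo{\alpha^t}{0}$, $f(-1,t)=\VecTwo{0}{\alpha^t}$ for the image claim, then establish the binary product formula $f(r,u)\star f(s,v)=f(rs,\,su+v)$ by the four sign cases and conclude~\eqref{eq:fst} by induction. The only cosmetic difference is that you transport through $\rho$ and compute in $\GL_2(\Z_2)$ using $\alpha^t\sigma=\sigma\alpha^{-t}$, whereas the paper stays in $(\GFF^2,\star)$ and computes the four cases there directly.
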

\begin{proof}
  For proving the first assertion, we note that
\begin{align*}
\operatorname{im}(\rho \circ f)
&= \rho(\{f(1, t) \mid t \in \Z_3\} \cup \{f(-1,t)\mid t\in\Z_3\})
\\&=\rho(\{(\alpha^t, 0) \mid t \in \Z_3\} \cup \{(0, \alpha^t)\mid t\in\Z_3\})
\\&=\rho(\{(u, v) \in \GFF^2 \mid (u\neq 0 \wedge v = 0) \vee (u = 0 \wedge v \neq 0)\})
\\&= \{1, \alpha, \alpha^2, \sigma,\sigma\alpha, \sigma\alpha^2\}
\\&=\GL_2(\Z_2).
\end{align*}
The last equality holds because the elements of $\{1, \alpha, \alpha^2, \sigma,\sigma\alpha, \sigma\alpha^2\}$ are distinct and invertible and $\GL_2(\Z_2)$ has exactly $6$ elements.
To derive~\eqref{eq:fst}, we first show that for all $(r, u), (s,v) \in\{-1,1\}\times\Z_3$ we have
\begin{equation} \label{eq:fmult}
  f(r,u)\star f(s,v) = f(rs, su + v).
\end{equation}
We distinguish four cases: \\
\textbf{Case} $(r,s) = (1, 1)$: Then $f(1,u) \star f(1,v) =  \VecTwo{\alpha^{u}}{0}
\star \VecTwo{\alpha^{v}}{0} = \VecTwo{\alpha^{u+v}}{0} = f(1, u + v)$.
\textbf{Case} $(r,s) = (1, -1)$: Then $f(1,u) \star f(-1, v) = \VecTwo{\alpha^{u}}{0}
\star \VecTwo{0}{\alpha^{v}} = \VecTwo{0}{\alpha^{-u + v}} =
f (-1, -u  + v)$. \\
\textbf{Case} $(r,s) = (-1, 1)$: Then $f(-1, u) \star f(1, v)) =
\VecTwo{0}{\alpha^{u}} \star \VecTwo{\alpha^{v}}{0} = \VecTwo{0}{\alpha^{u + v}}
= f (-1, u + v)$. \\
\textbf{Case} $(r,s) = (-1, -1)$: $f (-1,u) \star f (-1, v) =
     \VecTwo{0}{\alpha^{u}} \star \VecTwo{0}{\alpha^{v}} 
     = \VecTwo{\alpha^{-u + v}}{0} = f (1, -u + v)$. \\
This completes the proof of~\eqref{eq:fmult}.
Next, we prove \eqref{eq:fst} by induction on $n$. 
For $n = 1$, we have $f(s_1, t_1) = f (s_1, t_1 \cdot 1)$. Since $\prod_{j = 2}^{1} s_j$ is the empty
product, and therefore equal to $1$, we obtain $f (s_1, t_1 \cdot 1) =
f (s_1,  t_1 \prod_{j = 2}^1 s_j)$.
For the induction step from $n$ to $n+1$, we let $n \ge 1$ and 
$(s_1, t_1), \dots, (s_{n+1}, t_{n+1}) \in  \{-1,1\} \times \Z_3$. 
We then have 
\begin{align*}
  f(s_1, t_1) \star & \cdots \star f(s_{n+1}, t_{n+1}) = f(s_1\cdots s_n, \, \sum_{i = 1}^n t_i\prod_{j=i+1}^n s_j)
     \star f(s_{n+1}, t_{n+1}) 
     \\&= f(s_1\cdots s_ns_{n+1},\,  s_{n+1} (\sum_{i = 1}^{n}t_i\prod_{j=i+1}^{n} s_j) + t_{n+1} ) \qquad \text{(by \eqref{eq:fmult})} 
     \\&= f(s_1\cdots s_ns_{n+1},\,  (\sum_{i = 1}^{n} t_i\prod_{j=i+1}^{n+1} s_j) + t_{n+1} )
\\&= f(s_1\cdots s_{n+1}, \, \sum_{i = 1}^{n+1}t_i\prod_{j=i+1}^{n+1} s_j).
\end{align*}
This completes the induction and establishes~\eqref{eq:fst}.
\end{proof}

\begin{thm}
\label{thm:restop1}
The restricted polynomial equivalence problem over $\operatorname{Mat}_2(\Z_2)$ can be polytime-reduced to Problem~\ref{prb:P1}.
\end{thm}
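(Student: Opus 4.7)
The plan is to transport $p$ across the ring isomorphism $\rho$ of Lemma~\ref{lem:rhom}, parametrise each element of $\GL_2(\ob{F}_2)$ by $\{-1,1\}\times\Z_3$ via the bijection $\rho\circ f$ of the preceding lemma, and then extract two coordinate equations over $\GFF$ that can be fused into a single instance of Problem~\ref{prb:P1}. Given the restricted polynomial $p$ over $\Mat_2(\ob{F}_2)$ in variables $x_1,\dots,x_n$, I would first introduce, for each $i$, a pair $(s_i,t_i)\in\{-1,1\}\times\Z_3$ with $x_i=\rho(f(s_i,t_i))$, and replace each invertible constant of $p$ by the unique pair representing it. Applying the multiplicative formula \eqref{eq:fst} monomial-by-monomial collapses each monomial $m$ of $p$ into a single expression $\rho(f(\sigma_m(\vb{s}),\tau_m(\vb{s},\vb{t})))$, where $\sigma_m$ is the product of the $\pm 1$-parts of the factors of $m$ and $\tau_m$ is a sum of $O(|m|)$ monomials, each consisting of one $t_j$ or one $\Z_3$-constant multiplied by a tail product of $s$-variables and constant signs; both are computable in time polynomial in $|m|$.

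Since $\rho$ is an additive isomorphism, ``$p(\vb{A})=0$ for every $\vb{A}\in\GL_2(\ob{F}_2)^n$'' is equivalent to
\[ \sum_m f\bigl(\sigma_m(\vb{s}),\tau_m(\vb{s},\vb{t})\bigr)\,=\,0 \quad\text{in } \GFF^2 \]
holding for every $(\vb{s},\vb{t})\in\{-1,1\}^n\times\Z_3^n$. Reading off the two coordinates of $f$ from its defining formula turns this into the conjunction of two equations of the form $\sum_j\alpha^{Q_j(\vb{s},\vb{t})}=0$ with $Q_j\in\Z_3[\vb{s},\vb{t}]$.

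Two further gadgets bring this into the precise shape of Problem~\ref{prb:P1}. To force every variable to range over $\{-1,1\}$, I replace each $t_j\in\Z_3$ by $2(a_j+b_j)\bmod 3$ with fresh $a_j,b_j\in\{-1,1\}$; since $(a,b)\mapsto 2(a+b)\bmod 3$ is surjective onto $\Z_3$, the quantifier ``for all $t_j\in\Z_3$'' faithfully becomes ``for all $a_j,b_j\in\{-1,1\}$''. To collapse the two scalar equations into one, I introduce an auxiliary variable $c\in\{-1,1\}$ and replace every summand $\alpha^Q$ originating from coordinate $i\in\{1,2\}$ by three shifted copies $\alpha^{Q+\delta_i^{(1)}(c)},\alpha^{Q+\delta_i^{(2)}(c)},\alpha^{Q+\delta_i^{(3)}(c)}$ with affine $\Z_3$-polynomials $\delta_i^{(j)}(c)$ chosen so that at the ``active'' value of $c$ all three shifts vanish (making the triple equal $3\alpha^Q=\alpha^Q$ in $\GFF$) and at the other value they form a permutation of $\{0,1,2\}$ (so the triple sums to $\alpha^Q(1\oplus\alpha\oplus\alpha^2)=0$). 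Using $x^2=1$ on $\{-1,1\}$, I then reduce every resulting $\Z_3$-exponent polynomial into the expanded form required by Problem~\ref{prb:P1}.

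The main obstacle is verifying that the $c$-activation gadget really translates a conjunction of two $\GFF$-equations into a single equation without altering the universal quantifier; the size analysis is then routine, since both the substitution $t_j\leftarrow 2(a_j+b_j)$ and the triplication blow the size up only by constant factors, so the resulting list of expanded polynomials has size polynomial in $|p|$.
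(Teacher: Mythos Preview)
Your proposal is correct and follows essentially the same route as the paper: both transport $p$ through the isomorphism $\rho$, parametrise $\GL_2(\ob{F}_2)$ via $\rho\circ f$, use the product formula~\eqref{eq:fst} to collapse each monomial, read off the two $\GFF$-coordinates as sums of $\alpha$-powers, and then fuse the two resulting equations with one extra $\{-1,1\}$-variable. The only substantive difference is the fusion gadget: the paper multiplies the two coordinate sums $a,b$ by the selector coefficients $\alpha^{x_{(n+1,1)}}\oplus\alpha$ and $\alpha^{x_{(n+1,1)}}\oplus\alpha^2$, which picks out exactly one coordinate at each value of the new variable and doubles (rather than triples) the number of summands; your triplication-with-shifts idea is equally valid but slightly less economical. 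Your replacement $t_j\leftarrow 2(a_j+b_j)$ is the same device as the paper's $t_j\leftarrow y_{(j,2)}+y_{(j,3)}$ up to the harmless factor~$2$.
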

\begin{proof}
  We prepare for the reduction by producing, for each $m \in \N$,
  certain polynomials
  in $\Z_3 [Y]$ in the variables $Y = \{ y_{(i,j)} \mid i \in \ul{m},
  j \in \{1,2,3\}\}$.
  Now we define
  \[
  \begin{array}{rcl}
    u_m &:=& y_{(1,1)} \cdots y_{(m,1)}, \\
    v_m &:=& \sum_{i=1}^m  y_{(i, 2)} \prod_{j=i+1}^m y_{(j, 1)}
            +\sum_{i=1}^m  y_{(i, 3)} \prod_{j=i+1}^m y_{(j, 1)}.
  \end{array}
  \]
  Next, we define expressions $a_m$ and $b_m$ by
  \[
     \begin{array}{rcl}
       a_m &:=& \alpha^{u_m+v_m} \oplus \alpha^{2+v_m}, \\
       b_m &:=& \alpha^{u_m+v_m} \oplus \alpha^{1+v_m}.
     \end{array}
     \]
  Let $f$ be the function defined  in Lemma \ref{lem:rhom}. Then
  for all $\vb{y} \in (\{-1,1\}^3)^m$, we can evaluate
  $a_m (\vb{y})$ and $b_m (\vb{y})$ by
\begin{equation} \label{eq:eval1}
 \VecTwo{a_m(\vb{y})}{b_m(\vb{y})} 
   = \VecTwo{\alpha^{u_m(\vb{y})+v_m(\vb{y})} \, \oplus \, \alpha^{2+v_m(\vb{y})}}
           {\alpha^{u_m(\vb{y})+v_m(\vb{y})} \, \oplus \, \alpha^{1+v_m(\vb{y})}}
           = f( u_m(\vb{y}), v_m(\vb{y})).
\end{equation}
Using the definitions of $u_m$ and $v_m$,
we obtain
\begin{align} \label{eq:eval2}
  f( u_m(\vb{y}),  v_m(\vb{y})) &=
  f(y_{(1,1)}\cdots y_{(m, 1)}, \, \sum_{i = 1}^m (y_{(i, 2)} + y_{(i, 3)}) \!
  \prod_{j=i+1}^m y_{(j, 1)}).
\end{align}
By~\eqref{eq:fst}, the last expression is equal to
\[
    f(y_{(1,1)}, y_{(1, 2)} + y_{(1,3)}) \star \dots  \star f(y_{(m,1)}, y_{(m, 2)} + y_{(m,3)}).
\]

We will now start the reduction of the restricted equivalence problem.
To this end, let $n\in \N$ and let $p$ be a restricted polynomial expression in the variables $x_1, \dots, x_n$. Then $p$ is a sum of restricted monomial
expressions, hence there are $k\in\N_0$ and words $p_1, \dots, p_k \in (\GL_2(\Z_2) \cup \{x_1, \dots, x_n\})^\ast$ such that
\[ p = p_1 + \cdots + p_k. \]
We will first transform $p$ into an expression $q$ whose variables
$\bigcup_{j \in\underline{n}} \{x_{(j, 1)}, x_{(j, 2)}, x_{(j, 3)}\}$ 
are intended to range over $\{-1, 1\}$, and whose value
lies in $\GFF^2$. The fundamental property
of this expression $q$ will be that
for all $((x_{(j, 1)}, x_{(j, 2)}, x_{(j, 3)}))_{j \in\underline{n}} \in
(\{-1,1\}^3)^n$, we have
\begin{multline} \label{eq:pq}
  p (\rho \circ f (x_{(1,1)}, x_{(1,2)} + x_{(1,3)}), \ldots,
     \rho \circ f (x_{(n,1)}, x_{(n,2)} + x_{(n,3)})) \\ =
  \rho (q ( x_{(1, 1)}, x_{(1, 2)}, x_{(1, 3)}, \ldots,
            x_{(n, 1)}, x_{(n, 2)}, x_{(n, 3)})).
\end{multline}  
To this end, we let $i \in \ul{k}$, and we apply the following
rewrite rules to $p_i$, which we assume to have length $m$.
First, the variables in the word $p_i$ get replaced
using the rule
\begin{equation*}
    x_k  \longrightarrow  f (x_{(k,1)}, x_{(k,2)} + x_{(k,3)})
\end{equation*}
for $k \in \ul{n}$. For each constant 
$c \in \GL_2(\Z_2)$, we find $s(c, 1), s(c, 2)$ and $s(c, 3) \in \{-1, 1\}$
with
$\rho(f(s (c, 1), s(c, 2) + s(c, 3)) = c$.
Such $s(c,1), s(c,2)$ and $s(c,3)$ exists for all $c \in \GL_2(\Z_2)$
because $\operatorname{im}(\rho \circ f) = \GL_2(\Z_2)$ and because for each $u \in \Z_3$ there are $s, t\in\{-1,1\}$ with $u = s+t$.
Then the replacement
of the constants is done by applying the rewrite rules
\begin{equation*}
  c \longrightarrow f(s (c, 1), s(c, 2) + s(c,3))
\end{equation*}
with $c \in \GL_2 (\Z_2)$.
Applying these rewrite rules, the word $p_i$ over the alphabet
$\GL_2 (\Z_2) \cup \{x_1, \ldots, x_n\}$ is transformed into
a word over the alphabet
$\{ f(1,1+1)), f(1,1 + (-1)), \ldots, f((-1),(-1) + (-1)\} \cup
\{ f (x_{(k,1)}, x_{(k,2)} + x_{(k,3)}) \mid k \in \ul{n} \}$.
This word is of the form
\[
f(z_{1,1}, z_{1,2} + z_{1,3}) \, \ldots \, f (z_{m,1}, z_{m,2} + z_{m,3}),
\]
where all $z_{j,i}$ are elements of $\{1,-1\} \cup \{ x_{(k,l)} \mid
   (k, l) \in \ul{n} \times \{1,2,3\}$.
   We transform this word into the expression $q_i$ given by
\[
q_i := f(z_{1,1}, z_{1,2} + z_{1,3}) \star \cdots \star
f (z_{m,1}, z_{m,2} + z_{m,3}).
\]
We see that these replacements guarantee
\begin{multline*}
p_i (\rho \circ f (x_{(1,1)}, x_{(1,2)} + x_{(1,3)}), \ldots,
     \rho \circ  f (x_{(n,1)}, x_{(n,2)} + x_{(n,3)})) \\ =
  \rho (q_i ( x_{(1, 1)}, x_{(1, 2)}, x_{(1, 3)}, \ldots,
  x_{(n, 1)}, x_{(n, 2)}, x_{(n, 3)}))
\end{multline*}
  for all $((x_{(j, 1)}, x_{(j, 2)}, x_{(j, 3)}))_{j \in\underline{n}} \in
  (\{-1,1\}^3)^n$.
  Next, we compute an expression $q_i'$ from $q_i$ by defining
  \[
  q_i' = \VecTwoBig{a_m ( (z_{1,1}, z_{1,2}, z_{1,3}), \ldots,
    (z_{m,1}, z_{m,2}, z_{m,3}) )}
                {b_m ( (z_{1,1}, z_{1,2}, z_{1,3}), \ldots,
                  (z_{m,1}, z_{m,2}, z_{m,3}) )}.
  \]              
  The calculations in \eqref{eq:eval1} and~\eqref{eq:eval2}
  justify that
  \begin{multline*}
    q_i ( x_{(1, 1)}, x_{(1, 2)}, x_{(1, 3)}, \ldots,
  x_{(n, 1)}, x_{(n, 2)}, x_{(n, 3)}) \\ =
   q_i' ( x_{(1, 1)}, x_{(1, 2)}, x_{(1, 3)}, \ldots,
   x_{(n, 1)}, x_{(n, 2)}, x_{(n, 3)})
  \end{multline*}
   for all $((x_{(j, 1)}, x_{(j, 2)}, x_{(j, 3)}))_{j \in\underline{n}} \in
   (\{-1,1\}^3)^n$.
   Finally, we compute
   \[
   q = q_1' \oplus \cdots \oplus q_k',
   \]
   and see that $q$ satisfies the property~\eqref{eq:pq}.
   Now $q' = \VecTwo{a}{b}$, where $a,b$ are expressions
   whose variables range over $\{-1,1\}$ that evaluate
   to elements of $\GFF$. Let $x_{(n+1, 1)}$ be a new variable.
   We define an expression $r$
   by
   \[
   r := (\alpha^{x_{(n+1,1)}} \oplus \alpha) \cdot a \,\, \oplus \,\,
   (\alpha^{x_{(n+1,1)}} \oplus \alpha^2) \cdot b,
   \]
   and we let $r'$ its expansion using the laws
   $(\beta \oplus \gamma) \cdot \delta = \beta \cdot \delta \,\oplus\,
   \gamma \cdot \delta$
   and $\alpha^{u} \cdot \alpha^{v} = \alpha^{u+v}$ of $\GFF$.
   
   Next we show that $r'$ evaluates to $0$ for all
   $\vb{a} \in \{-1,1\}^{3 n + 1}$ if and only if
   $p (\vb{b}) = 0$ for all $\vb{b} \in \GL_2 (\Z_2)^n$.
   Let us first assume that $p (\vb{b}) \neq 0$ with
   $\vb{b} = (b_1, \ldots, b_n) \in \GL_2 (\Z_2)^n$. Then
   we find $b_{(i,j)}'s$ in $\{-1,1\}$ with
   $f(b_{(i,1)}, b_{(i,2)} + b_{(i,3)}) = b_i$ for $i \in \ul{n}$.
   By~\eqref{eq:pq}, we find that
   $q ( b_{(1, 1)}, b_{(1, 2)}, b_{(1, 3)}, \ldots,
   b_{(n, 1)}, b_{(n, 2)}, b_{(n, 3)})) =: \VecTwo{a'}{b'}
   \neq \VecTwo{0}{0}$.
   If $a' \neq 0$, then we set $x_{(n+1,1)} := -1$ and obtain a
   place at which $r$ is not zero; if $b' \neq 0$, we set
   $x_{(n+1,1)} := 1$ and obtain a place at which $r$ is not zero.

   Now we assume that $p (\vb{b}) = 0$ for all $\vb{b} \in \GL_2 (\Z_2)$.
   Then by~\eqref{eq:pq}, we have that $q$ is zero for all values
   in $(\{-1,1\}^3)^n$, which clearly implies that $r'$ is zero
   for all $\vb{a} \in \{-1,1\}^{3n+1}$.
   \end{proof}

\begin{thm}
    \label{thm:back2}
    Problem~\ref{prb:P1}
    can be polytime-reduced to the restricted equivalence
    problem over $\operatorname{Mat}_2(\Z_2).$
\end{thm}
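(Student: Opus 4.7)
The plan is to encode each Boolean variable $a_j \in \{-1,1\}$ of the arithmetic problem by an arbitrary matrix variable $X_j$ ranging over $\GL_2(\Z_2)$, identifying $a_j$ with the sign character $\sgn(X_j)$, i.e., the homomorphism $\GL_2(\Z_2) \to \{\pm 1\}$ whose kernel is $\GFF^\times = \{I,\alpha,\alpha^2\}$ and whose nontrivial coset is $\sigma\cdot\GFF^\times$. Every $\vb{a} \in \{-1,1\}^n$ is realized as $(\sgn(X_j))_{j \in \ul{n}}$ by some $\vb{X} \in \GL_2(\Z_2)^n$ (e.g.\ $X_j \in \{I,\sigma\}$), and conversely each $\vb{X}$ induces such an $\vb{a}$, so a ``$\forall \vb{a}$''-statement becomes a ``$\forall \vb{X}$''-statement.

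The crucial technical ingredient is a short \emph{restricted monomial} that computes $\alpha^{c\cdot\sgn(Y)}$ for every $c \in \Z_3$. I will use
\[
  g_c(Y) := Y \cdot \alpha^c \cdot Y^5.
\]
Since $|\GL_2(\Z_2)| = 6$, every $Y \in \GL_2(\Z_2)$ satisfies $Y^5 = Y^{-1}$. For $Y \in \GFF^\times$, commutativity of $\GFF$ yields $g_c(Y) = \alpha^c$; for $Y = \sigma z$ with $z \in \GFF^\times$, Lemma~\ref{lem:s3} gives $g_c(Y) = \sigma\alpha^c\sigma = \alpha^{2c} = \alpha^{-c}$. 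Writing $g_c$ as a single monomial of length $7$ (rather than invoking Proposition~\ref{pro:Riscomplete} to realize it as a generic restricted polynomial) is what prevents the exponential blowup threatened by Lemma~\ref{productLemma} in the next step.

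Given the input $p_1,\ldots,p_k$ with $p_i = \sum_{\vb{e} \in I_i} c_{\vb{e}}^{(i)} x_1^{e_1}\cdots x_n^{e_n}$, I will produce
\[
  P(\vb{X}) \, := \, \sum_{i=1}^k \prod_{\vb{e} \in I_i} g_{c_{\vb{e}}^{(i)}}\Big(\prod_{j:\, e_j = 1} X_j\Big).
\]
Each inner factor is a single restricted monomial of length at most $6n+1$, so each $Q_i := \prod_{\vb{e}} g_{c_{\vb{e}}^{(i)}}(\ldots)$ is again a single monomial, and $P$ has total length at most $(6n+1)\sum_i |I_i|$ and is constructible in polynomial time. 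For $\vb{X} \in \GL_2(\Z_2)^n$ and $a_j := \sgn(X_j)$, multiplicativity of $\sgn$ yields $g_{c_{\vb{e}}^{(i)}}(\prod_{j: e_j = 1} X_j) = \alpha^{c_{\vb{e}}^{(i)} \prod_{j: e_j = 1} a_j}$; since $a_j^{e_j} = a_j$ when $e_j = 1$ and $a_j^{e_j} = 1$ when $e_j \in \{0,2\}$, commutativity of $\GFF$ yields $Q_i(\vb{X}) = \alpha^{p_i(\vb{a})}$ and hence $P(\vb{X}) = \sum_i \alpha^{p_i(\vb{a})}$.

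Thus $P$ vanishes on all of $\GL_2(\Z_2)^n$ if and only if $\sum_i \alpha^{p_i(\vb{a})} = 0$ for all $\vb{a} \in \{-1,1\}^n$, completing the reduction. The principal obstacle is precisely the length bound: a naive realization of $g_c$ via Proposition~\ref{pro:Riscomplete} as a multi-term restricted polynomial would, upon being multiplied $|I_i|$ times in $Q_i$, explode by Lemma~\ref{productLemma}. The monomial identity $g_c(Y) = Y\alpha^c Y^{|G|-1}$ for $G = \GL_2(\Z_2)$ keeps every intermediate object a single monomial and therefore keeps the reduction polynomial-time.
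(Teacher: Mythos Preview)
Your proof is correct and follows essentially the same approach as the paper: both encode $\{-1,1\}$-variables via the sign character on $\GL_2(\Z_2)$ and use the conjugation identity $Y\alpha^c Y^{-1}=\alpha^{c\cdot\sgn(Y)}$ (with $Y^{-1}=Y^5$) to build a single restricted monomial per term, then sum. The only cosmetic differences are that the paper first duplicates monomials to force all coefficients to $1$ (so it conjugates $\alpha$ rather than $\alpha^c$) and writes out $x_j^{e_j}$ literally instead of dropping the $e_j\in\{0,2\}$ factors via $a_j^2=1$.
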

\begin{proof}
  Let $p_1, \ldots, p_k$ be expanded polynomials in
  $\Z_3 [x_1, \ldots, x_n]$. We seek to determine
  whether
  $\forall \vb{x} \in \{-1,1\}^n : \sum_{i=1}^k \alpha^{p_i (\vb{x})} = 0$.
  Replacing every monomial of the form $2x_1^{e_1}\cdots x_n^{e_n}$
  by $x_1^{e_1}\cdots x_n^{e_n} + x_1^{e_1}\cdots x_n^{e_n}$, we
  may assume that each $p_i$ is of the form
  \[
  p_i = m_{i,1} + \dots + m_{i,r_i}, 
  \]
  where each $m_{i,j}$ 
  is a product of variables from $x_1, \dots, x_n$.
  Next, we define a function $\sgn : \GL_2 (\Z_2) \to \{-1,1\}$
  as follows: every $x \in \GL_2(\Z_2)$ can be written
  $x = \sigma^i\alpha^j$ with $i \in\{0,1\}$ and
  $j \in \{0,1,2\}$. Then we define $\sgn(x) := (-1)^i$.
  We note that $\sgn (x)$ is just the signature of the
  permutation $\pi (x)$ when $\pi$ is an isomorphism
  from $\GL_2 (\Z_2)$ to the symmetric group $S_3$.
  Then for all $x, y \in \GL_2(\Z_2)$, we have
  $$\sgn(xy) = \sgn(x)\sgn(y)$$ and
   $$x\alpha x^{-1} = \sigma^i\alpha^j \alpha \alpha^{-j}\sigma^{-i} = \alpha^{\sgn(x)}.$$
  We will now define a restricted polynomial expression
  over $\matrixRing$ by
  \[
  s := \sum_{i\in\underline{k}} \prod_{j\in\underline{r_i}}
  m_{i,j} \, \alpha \, m_{i,j} \, m_{i,j} \,  m_{i,j} \, m_{i,j} \, m_{i,j}.
  \]
  We will now show that 
  for all $\vb{y} \in \GL_2(\Z_2)^n$, we have
  \begin{equation} \label{eq:sp}
    s(\vb{y}) = p(\sgn(y_1), \dots, \sgn(y_n)).
  \end{equation}
  To this end, let $\vb{y} \in \GL_2(\Z_2)^n$.
  We have
  \[
  \begin{split}
    s (\vb{y})  & =
      \sum_{i\in\underline{k}} \prod_{j\in\underline{r_i}}
      m_{i,j} (\vb{y}) \cdot \alpha \cdot (m_{i,j} (\vb{y}))^{-1}
      \\
      & =
      \sum_{i\in\underline{k}} \prod_{j\in\underline{r_i}}
      \alpha^{\sgn (m_{i,j} (\vb{y}))}.
  \end{split}
  \]
  Since $\sgn$ is a homomorphism from $(\GL_2 (\Z_2), \cdot)$
  to $(\Z_3 \setminus \{0\}, \cdot)$,
  the last expression is equal to
  \[
  \begin{split}
     \sum_{i\in\underline{k}} \prod_{j\in\underline{r_i}}
     \alpha^{m_{i,j} (\sgn (y_1), \ldots, \sgn (y_n))}
        & = 
      \sum_{i\in\underline{k}} 
      \alpha^{\sum_{j\in\underline{r_i}}m_{i,j} (\sgn (y_1), \ldots, \sgn (y_n))}
      \\
      & =
          \sum_{i\in\underline{k}} 
          \alpha^{p_i (\sgn (y_1), \ldots, \sgn (y_n))} \\
          & = p (\sgn (y_1), \ldots, \sgn (y_n)).
  \end{split}
 \] 
This proves~\eqref{eq:sp}. From this equation, it follows that 
$\forall \vb{y}\in \{-1,1\}^n: p(\vb{y}) = 0$ holds if and only if $\forall \vb{y}\in\GL_2(\Z_2)^n: s(\vb{y}) = 0$.
\end{proof}

\section{Reduction to modular circuits} \label{sec:circuits}
In this section, we relate Problem~\ref{prb:P1} to the circuit equivalence problem for $\operatorname{CC}[2,3,2]$-circuits. We will express such a circuit using
the functions $\operatorname{Mod}_n : \Z \to \Z$ defined
by $\operatorname{Mod}_n (x) = 1$ if $n$ divides $x$ and
$\operatorname{Mod}_n (x) = 0$ if $n \nmid x$.
Since the depth of the circuits we consider is constant, we can transform
$\operatorname{CC}[2,3,2]$-circuits
into nested expressions with polynomial overhead (see \cite[Section~3]{IKK:IPIM} for such a transformation in a slightly different formalism).

One example of an expression that represents a $\operatorname{CC}[2,3,2]$-circuit is
\[
C(y_1, y_2, y_3) := \operatorname{Mod}_2\Big(1 + \operatorname{Mod}_3\big(1 + 2\cdot \operatorname{Mod}_2(y_1 + y_2)\big) + \operatorname{Mod}_3\big(\operatorname{Mod}_2(y_3)\big)\Big).
\]
\begin{lem}
\label{lem:d2}
Let $n\in\N$ and let $p\in\Z_3[x_1, \dots, x_n]$ be an expanded polynomial. 
Then we can construct a $\operatorname{CC}[3,2]$-circuit $C$ of size $O(\length{p})$ such that for all $\vb{x}\in \{0,1\}^n,$ we have $p((-1)^{x_1}, \dots, (-1)^{x_n}) = 0$ if and only if $C(\vb{x}) = 1$.
\end{lem}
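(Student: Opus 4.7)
The plan is to exploit that each monomial of $p$, evaluated at an input from $\{-1,1\}^n$, takes the value $\pm 1$, so that its $\Z_3$-coefficient $c_i \in \{1,2\}$ can be absorbed into the sign of the monomial. Write $p = \sum_{i=1}^{k} c_i \mu_i$ in the given expanded form with $\mu_i = x_1^{e_{i,1}} \cdots x_n^{e_{i,n}}$, and set $S_i := \{j \in \ul{n} : e_{i,j} \text{ is odd}\}$. Since $(-1)^{2e} = 1$, we have $\mu_i((-1)^{y_1}, \ldots, (-1)^{y_n}) = (-1)^{\sum_{j \in S_i} y_j}$. Because $2 \equiv -1 \pmod{3}$, one checks that $c_i \mu_i((-1)^{\vb y}) \in \Z_3$ equals $1$ precisely when $\sum_{j \in S_i} y_j + [c_i = 2]$ is even, and equals $2$ otherwise.

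Guided by this, I would take the input-side gates of $C$ to be the $\operatorname{Mod}_2$ gates
\[
b_i(\vb y) := \operatorname{Mod}_2\Bigl(\,\sum_{j \in S_i} y_j + [c_i = 2]\Bigr), \qquad i = 1, \ldots, k,
\]
each realized using the $|S_i|$ wires $\{y_j : j \in S_i\}$ together with one constant-$1$ wire in the case $c_i = 2$. Using the integer representatives $\{1,2\} \subseteq \Z$ for the $\Z_3$-values of the summands, the sum $\sum_{i=1}^{k} c_i \mu_i((-1)^{\vb y})$ then lifts to the nonnegative integer $\sum_i b_i + 2 \sum_i (1 - b_i) = 2k - \sum_i b_i$, which is divisible by $3$ if and only if $\sum_i b_i + k \equiv 0 \pmod{3}$. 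I complete $C$ with a single output $\operatorname{Mod}_3$ gate whose inputs are the $k$ wires $b_1, \ldots, b_k$ together with $(k \bmod 3)$ constant-$1$ wires; this gate outputs $1$ iff $p((-1)^{\vb y}) = 0$, as required.

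For the size estimate, the $k$ input-side gates together receive at most $\sum_i (|S_i| + 1) \le \length{p} + k$ wires, and the output gate uses at most $k + 2$ wires; combined with the $k+1$ gates in total, this yields $\length{C} = O(\length{p})$. The main bookkeeping task I anticipate is handling degenerate monomials with $S_i = \emptyset$ (such as the constant term of $p$): the corresponding $b_i$ is then a constant, which can either be absorbed directly into the constant-$1$ wires at the output $\operatorname{Mod}_3$ gate, or realized via a trivial $\operatorname{Mod}_2$ gate with no variable inputs; neither alternative affects the asymptotic bound.
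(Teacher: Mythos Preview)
Your proof is correct and follows essentially the same approach as the paper: one $\operatorname{Mod}_2$ gate per monomial computes the parity of the relevant $y_j$-sum, and a single top $\operatorname{Mod}_3$ gate checks whether the resulting $\Z_3$-sum vanishes. The only cosmetic difference is that the paper handles the coefficient $c(e)$ by scaling at the $\operatorname{Mod}_3$ level (via the expression $c(e)(-1+2\operatorname{Mod}_2(\cdot))$) after first multilinearizing $p$, whereas you absorb the coefficient into the $\operatorname{Mod}_2$ gate as an extra constant bit and handle higher exponents implicitly via the parity set $S_i$.
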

\begin{proof}
  Since we are only interested in the values of $p$ at $\{-1,1\}$, we may
  replace $p$ by its remainder modulo  $\{x_i^2 - 1 \mid i \in \ul{n} \}$.
  This remainder $p'$ has degree at most $1$ in every variable. 
  Now choose $E \subseteq \{0,1\}^n$ and $c: E \rightarrow \Z_3\setminus \{0\}$ such that $p' = \sum_{e\in E} c(e) \vb{x}^e$, where $\vb{x}^e$ is a shorthand
  of $x_1^{e_1} \cdots x_n^{e_n}$.
Consider the circuit 
$$
C(\vb{y}) := \operatorname{Mod}_3\big( \sum_{e\in E} c(e) (-1 + 2\cdot \operatorname{Mod}_2(\sum_{i=1}^n y_ie_i))\big).
$$
Now let $\vb{y} \in \{0,1\}^n$ and let $\vb{x} := ((-1)^{y_1}, \ldots,
    (-1)^{y_n}))$.
For each $e\in E,$ we have $c(e)\vb{x}^e = c(e)(-1)^{\sum_{i=1}^n y_ie_i} = c(e) \big(-1 + 2\cdot \operatorname{Mod}_2(\sum_{i=1}^n y_ie_i)\big)$, and therefore
$p(\vb{x}) = 0$ if and only if $\sum_{e\in E} c(e) (-1 + 2\cdot \operatorname{Mod}_2(\sum_{i=1}^n y_ie_i)) \equiv_3 0.$
\end{proof}

\begin{thm} 
\label{thm:circ}
Problem~\ref{prb:P1} can be polytime-reduced to the circuit equivalence problem for $\operatorname{CC}[2,3,2]$-circuits.
\end{thm}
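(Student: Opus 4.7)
The plan is to reduce Problem~\ref{prb:P1} to $\operatorname{CC}[2,3,2]$-circuit equivalence by constructing, in polynomial time from expanded polynomials $p_1, \ldots, p_k \in \Z_3[x_1, \ldots, x_n]$, a $\operatorname{CC}[2,3,2]$-circuit $C$ such that $C(\vb{y}) = 1$ if and only if $\sum_{i=1}^k \alpha^{p_i((-1)^{y_1}, \ldots, (-1)^{y_n})} = 0$ in $\ob{F}_4$, for every $\vb{y} \in \{0,1\}^n$. Circuit equivalence then matches Problem~\ref{prb:P1} exactly.

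The heart of the argument is an algebraic identity in $\ob{F}_4$. Writing $s := \sum_{i=1}^k \alpha^{p_i(\vb{a})}$, the fact that $|\ob{F}_4^\ast| = 3$ forces $s^3 \in \{0, 1\}$, so $s = 0 \Longleftrightarrow s^3 = 0$. Expanding $s^3$ by distributivity gives $s^3 = \sum_{(i_1, i_2, i_3) \in \ul{k}^3} \alpha^{p_{i_1}(\vb{a}) + p_{i_2}(\vb{a}) + p_{i_3}(\vb{a})}$. Reading off the coefficient of $1$ in the $\{1, \alpha\}$-basis of $\ob{F}_4$, and noting that the $\alpha$-coefficient of $s^3$ is automatically zero since $s^3 \in \{0, 1\}$, this collapses $s = 0$ to a single mod-$2$ check:
\[
s = 0 \,\Longleftrightarrow\, \Big|\{(i_1, i_2, i_3) \in \ul{k}^3 : p_{i_1}(\vb{a}) + p_{i_2}(\vb{a}) + p_{i_3}(\vb{a}) \equiv 1 \pmod 3\}\Big| \,\equiv\, k \pmod 2.
\]

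With the collapse in hand, the circuit is assembled straightforwardly. For each $I = (i_1, i_2, i_3) \in \ul{k}^3$, set $q_I := p_{i_1} + p_{i_2} + p_{i_3} - 1 \in \Z_3[x_1, \ldots, x_n]$, which is an expanded polynomial of length at most $1 + |p_{i_1}| + |p_{i_2}| + |p_{i_3}|$, and invoke Lemma~\ref{lem:d2} to obtain a $\operatorname{CC}[3,2]$-circuit $C_I$ of size $O(|q_I|)$ with $C_I(\vb{y}) = 1$ iff $p_{i_1}(\vb{a}) + p_{i_2}(\vb{a}) + p_{i_3}(\vb{a}) \equiv 1 \pmod 3$. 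Put
\[
C(\vb{y}) := \operatorname{Mod}_2\Big(k + \sum_{I \in \ul{k}^3} C_I(\vb{y})\Big),
\]
where the constant $k$ is realized by feeding $k$ wires carrying $1$ into the outer gate. The result has $\operatorname{CC}[2,3,2]$-structure (one outer $\operatorname{Mod}_2$, the $k^3$ top gates of the $C_I$ as middle $\operatorname{Mod}_3$ gates, and the bottom $\operatorname{Mod}_2$ gates of the $C_I$) of total size $O(k^3 + k^2(|p_1| + \cdots + |p_k|))$, and the construction is clearly polynomial-time. Correctness is immediate from the displayed equivalence.

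The main obstacle is finding the right collapse. Writing $N_r := |\{i \in \ul{k} : p_i(\vb{a}) \equiv r \pmod 3\}|$, one sees that $s = 0$ unfolds naively to $N_0 \equiv N_1 \equiv N_2 \pmod 2$, an AND of two $\ob{F}_2$-parities that does not fit under a single outer $\operatorname{Mod}_2$ gate. Cubing $s$ and reading off one coordinate in the $\{1, \alpha\}$-basis fuses both parities into a single parity of a count over $k^3$ ordered triples; this blow-up is the price paid, but it remains polynomial and is exactly what lets the reduction land in $\operatorname{CC}[2,3,2]$.
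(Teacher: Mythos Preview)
Your proof is correct, but it takes a genuinely different route from the paper's. Both arguments start from the same observation: writing $N_r := |\{i : p_i(\vb{a}) \equiv r\}|$, one has $\sum_i \alpha^{p_i(\vb{a})} = 0$ if and only if $N_0 + N_2 \equiv 0$ and $N_1 + N_2 \equiv 0 \pmod 2$, which is an $\textsc{and}$ of two parities that a single $\operatorname{Mod}_2$ gate cannot compute directly. The paper resolves this by introducing an \emph{extra input variable} $y_0$: it builds a circuit $C_Q(y_0,\vb{y})$ with only $k$ middle $\operatorname{Mod}_3$ gates (one per $p_i$, via a cubic selector polynomial $f(p_i,x_0)$) such that $C_Q(0,\vb{y})=1$ encodes the first parity and $C_Q(1,\vb{y})=1$ encodes the second; the conjunction is then absorbed into the outer universal quantifier of the equivalence problem. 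You instead collapse the conjunction \emph{algebraically}, cubing $s$ so that $s^3 \in \{0,1\}$ and reading off a single coordinate, which yields one parity over $k^3$ ordered triples and hence a circuit on the original $n$ inputs with $k^3$ middle gates. Your approach is more self-contained (the circuit literally computes the indicator of $s=0$, not just a quantity whose universal validity matches the problem), while the paper's extra-variable trick keeps the circuit linear in $k$ at the cost of a less direct correspondence. Both are polynomial-time reductions; the only cosmetic point is that $q_I = p_{i_1}+p_{i_2}+p_{i_3}-1$ may need a trivial normalization (collecting repeated monomials) before it is ``expanded'' in the paper's strict sense, which you can mention in one line.
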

\begin{proof}
Let $Q(\vb{x}) := \sum_{i=1}^k \alpha^{p_i(\vb{x})}$ be an instance to Problem~\ref{prb:P1}, where $p_1,\dots, p_k\in \Z_3[x_1, \dots, x_n]$.
For every expanded polynomial $p \in \Z_3 [x_0, \ldots, x_n]$, let
$C(p)$ be the  $\operatorname{CC}[3,2]$-circuit with input wires $y_0, \dots, y_n$ produced from Lemma~\ref{lem:d2}. Hence for all $\vb{y} \in \{0,1\}^n$,
  we have
  \begin{equation} \label{eq:pc}
    p( (-1)^{y_1}, \ldots, (-1)^{y_n} ) = 0 \Longleftrightarrow
    C(p) \, (y_1, \ldots, y_n) = 1.
  \end{equation}
We define $f(a,b) := (b + 1)a(a - 2) + (b - 1)(a - 1)(a - 2).$ 
Note that $f(a,1) = 0$ if and only if $a\in \{0, 2\}$ and $f(a,-1) = 0$ if and only if $a \in \{1,2\}.$
Now consider the circuit
$$
C_Q(y_0,\dots, y_n) := \operatorname{Mod}_2(\sum_{i=1}^k C(f(p_i, x_0))).
$$
We will prove that for all $\vb{y}\in \{0, 1\}^n$, we have
\begin{equation} \label{eq:qc}
  Q((-1)^\vb{y}) = 0 \Longleftrightarrow C_Q(0, \vb{y}) = C_Q (1, \vb{y}) = 1.
\end{equation}
Here $(-1)^{\vb{y}}$ is a shorthand for $(-1)^{y_1} \cdots (-1)^{y_n}$.
Let $\vb{y}\in \{0, 1\}^n$. 
For $a\in \Z_3$, we set $k_a := \setsize{\{i \in \ul{k} : p_i((-1)^\vb{y}) = a\}}$.
Clearly
\[ Q((-1)^\vb{y}) = k_0 \cdot 1 + k_1 \cdot \alpha + k_2 \cdot (1 + \alpha) = (k_0 + k_2)\cdot 1 + (k_1 + k_2)\cdot \alpha.
\]
Therefore $Q((-1)^\vb{y}) = 0$ if and only if $k_0 + k_2 \equiv_2 0$ and $k_1 + k_2 \equiv_2 0$.
This is the case if and only if $0 \equiv_2 \setsize{\{i \in \ul{k} : f(p_i((-1)^\vb{y}), 1) = 0\}}$ and $0 \equiv_2 \setsize{\{i \in \ul{k}: f(p_i((-1)^\vb{y}), -1) = 0\}}$.
By~\eqref{eq:pc}, for each $i \in \ul{k}$, we have
$p_i ((-1)^\vb{y}, 1) = 0$ iff $C (f(p_i,x_0)) \, (0, \vb{y}) = 1$ and
$p_i ((-1)^\vb{y}, -1) = 0$ if $C (f(p_i,x_0)) \, (1, \vb{y}) = 1$.
Hence $Q((-1)^\vb{y}) = 0$ holds if and only if
$C_Q(0,\vb{y}) = C_Q(1, \vb{y}) = 1$.
\end{proof}

\begin{thm}
    \label{thm:back1}
    The circuit equivalence problem for $\operatorname{CC}[2,3,2]$-circuits can be polytime-reduced to Problem~\ref{prb:P1}.
\end{thm}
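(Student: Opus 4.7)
The plan is to invert the construction of Theorem~\ref{thm:circ} by translating each layer of the $\operatorname{CC}[2,3,2]$-circuit into polynomial ingredients. First, I would normalise the input circuit $C$ (with polynomial overhead, by standard unfolding of constant-depth circuits as in \cite[Section~3]{IKK:IPIM}) into the nested form
\[
C(\vb{z}) = \operatorname{Mod}_2\Big(e_0 + \sum_{j=1}^{K} g_j(\vb{z})\Big), \quad g_j = \operatorname{Mod}_3\Big(e_j + \sum_{b\in T_j} h_b\Big), \quad h_b = \operatorname{Mod}_2\Big(c_b + \sum_{i\in S_b} z_i\Big),
\]
where $T_j$ is a multiset of bottom gates and $S_b$ a subset of input indices. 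Encoding Boolean inputs $\vb{z}\in\{0,1\}^n$ by $\vb{a}\in\{-1,1\}^n$ via $z_i := (1-a_i)/2$, the bottom gate becomes $h_b(\vb{z}) = (1+s_b(\vb{a}))/2$ where $s_b(\vb{a}) := (-1)^{c_b}\prod_{i\in S_b} a_i$ is a single signed monomial in $\Z_3[\vb{a}]$.

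Working modulo~$3$ (where $2^{-1}\equiv 2$) turns $h_b$ into $2 + 2 s_b \pmod 3$, so the middle gate $g_j$ corresponds to the expanded polynomial
\[
Q_j := \sum_{b\in T_j} s_b \, + \, (2 e_j + |T_j|) \;\in\; \Z_3[a_1,\ldots,a_n]
\]
satisfying $g_j(\vb{z}) = 1$ if and only if $Q_j(\vb{a}) \equiv 0 \pmod 3$. Because every $s_b$ is a single monomial, $Q_j$ is multilinear in $\vb{a}$ with at most $|T_j|+1$ terms, and its overall size is polynomial in $\length{C}$.

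To handle the outer $\operatorname{Mod}_2$ gate I exploit the $\GFF$-identity $1 \oplus \alpha^2 = \alpha$ together with the $\Z_3$-identity that $x^2 + 2 = 0$ for $x \in \{1,2\}$ while $0^2 + 2 = 2$. For each $j\in\ul{K}$ I would output the pair
\[
p_{2j-1} := 0, \qquad p_{2j} := Q_j^2 + 2,
\]
so that $\alpha^{p_{2j-1}(\vb{a})} \oplus \alpha^{p_{2j}(\vb{a})}$ equals $1 \oplus \alpha^2 = \alpha$ when $g_j = 1$ and equals $1 \oplus 1 = 0$ when $g_j = 0$. Finally, I would append $\bar e_0 := e_0 \bmod 2$ copies of the constant polynomial $1$. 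The total $\GFF$-sum of the list at the input $\vb{a}$ then equals
\[
\Big(\bar e_0 + \sum_{j=1}^{K} g_j(\vb{z}(\vb{a}))\Big)\cdot \alpha \;\in\; \GFF,
\]
which vanishes precisely when $e_0 + \sum_j g_j \equiv 0 \pmod 2$, i.e., exactly when $C$ outputs $1$ on the corresponding input.

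The resulting instance of Problem~\ref{prb:P1} has at most $2K + 1$ polynomials. Each $Q_j^2$ expands to at most $(|T_j|+1)^2$ monomials and, crucially, because $Q_j$ is multilinear, all exponents in the expansion lie in $\{0, 1, 2\}$, matching the definition of expanded polynomial required by Problem~\ref{prb:P1}. The main step requiring care is precisely this expansion and term-collection step for $Q_j^2$ modulo $3$: it produces only a quadratic blow-up and can be carried out in time polynomial in $\length{C}$, and together with the normalisation into nested form this yields the desired polynomial-time reduction.
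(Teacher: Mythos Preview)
Your reduction is correct, and the verification that $\bigoplus_i \alpha^{p_i(\vb{a})} = \big(e_0 + \sum_j g_j(\vb{z})\big)\cdot\alpha$ in $\GFF$ goes through exactly as you describe. The squaring step is legitimate: since each $Q_j$ is multilinear, all exponents in $Q_j^2+2$ lie in $\{0,1,2\}$, and term collection costs only a quadratic blow-up per gate, so the overall instance has size $O(\length{C}^3)$.

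The paper's proof follows the same overall shape (translate each layer of the circuit into $\Z_3$-arithmetic and then package the result as a sum of powers of $\alpha$), but uses a different gadget for the middle layer. Instead of squaring $Q_j$ inside $\Z_3$, the paper applies the $\GFF$-polynomial $r(z)=(z\ominus\alpha)(z\ominus\alpha^2)=z^2\oplus z\oplus 1$ to $\alpha^{Q_j}$, yielding $r(\alpha^{Q_j})=\alpha^{2Q_j}\oplus\alpha^{Q_j}\oplus\alpha^0$, which equals $1$ if $Q_j\equiv 0\pmod 3$ and $0$ otherwise. The point is that $2Q_j$ here is the \emph{scalar multiple} of $Q_j$ in $\Z_3$, not its square, so each middle gate produces three polynomials each of the same length as $Q_j$. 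This keeps the output linear in $\length{C}$ rather than cubic. Your trick $Q_j^2+2$ computes the indicator of $Q_j\equiv 0$ already at the $\Z_3$ level and then lifts it to $\GFF$ via a fixed two-term pattern; the paper instead defers the indicator computation to $\GFF$ and exploits that squaring $\alpha^{Q_j}$ in $\GFF$ only doubles the \emph{exponent}. Both arguments are valid; the paper's is asymptotically tighter, while yours is perhaps conceptually simpler in that it isolates the zero-test entirely inside $\Z_3$.
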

\begin{proof}
    Let 
    \[
        C(\vb{y}) = \operatorname{Mod}_2\Big(c_0 + \sum_{i=1}^k
        \operatorname{Mod}_3\Big(c_1 (i) + \sum_{j=1}^{l(i)} \operatorname{Mod}_2\big(c_2 (i,j) + \sum_{m=1}^n e (i,j, m) y_m \big)\Big)
        \Big)
    \]
    be a $\operatorname{CC}[2,3,2]$-circuit with $n$ input wires $y_1, \ldots, y_n$,
    where $c_0, c_2(i,j), e(i,j, m) \in \{0,1\}$ and
    $c_1 (i) \in \{0,1,2\}$ for $1 \leq i \leq k$, $1 \leq j \leq l(i)$, $1 \le m \le n$.

    Let $r(z) := (z \ominus \alpha)(z \ominus \alpha^2)$,
    Writing $\alpha \uparrow x$ for $\alpha^x$, we
    consider the expression $Q (\vb{x})$ obtained
    by expanding
    \[
        c_0 + \sum_{i = 1}^k r \Big(\alpha \uparrow
        \big( c_1(i) + \sum_{j = 1}^{l(i)} -1 - (-1)^{c_2(i,j)}\vb{x}^{\vb{e}(i,j)}\big)\Big).
    \]
    Now let $\vb{y}\in \{0,1\}^n$ and let $\vb{x} := (-1)^\vb{y}$.
    We will show that $C(\vb{y}) = 1$ if and only if $Q(\vb{x}) = 0$.
    For $i\in \{1, \dots, k\}$, we have
    \begin{equation} \label{eq:m3}
        \operatorname{Mod}_3\Big(c_1 (i) + \sum_{j=1}^{l(i)} \operatorname{Mod}_2\big(c_2 (i,j) + \sum_{m=1}^n e (i,j, m) y_m \big)\Big) = 1
    \end{equation} 
    if and only if
    $$
    c_1(i) + \sum_{j = 1}^{l(i)} -1 -  (-1)^{c_2(i,j)}\vb{x}^{\vb{e}(i,j)} \equiv_3 0
    $$
    if and only if
    \[
        r \Big(\alpha \uparrow
        \big( c_1(i) + \sum_{j = 1}^{l(i)} -1 - (-1)^{c_2(i,j)}\vb{x}^{\vb{e}(i,j)}\big)\Big) = 1.
    \]
    Hence, if~\eqref{eq:m3} does not hold, then since
    the range of $r$ is $\{0, \alpha^0\}$,
    we have $r \Big(\alpha \uparrow
    \big( c_1(i) + \sum_{j = 1}^{l(i)} -1 - (-1)^{c_2(i,j)}\vb{x}^{\vb{e}(i,j)}\big)\Big) = 0$.
    Therefore $C(\vb{y}) = 1$ if and only if $Q(\vb{x}) = 0.$
\end{proof}

This theorem also completes the proof of Theorem~\ref{thm:S4}. With these reductions,
we can also prove~Theorem \ref{thm:algo}. 
\begin{proof}[Proof of Theorem \ref{thm:algo}]
We adapt the proof of \cite[Theorem~6.1]{IKK:COMC}.
Theorem~\ref{thm:S4} yields a $d_1>0$ and an algorithm $A$ that takes a polynomial $p$ over $S_4$ and returns,
in time $\le |p|^{d_1}$, a $\operatorname{CC}[2,3,2]$-circuit $C'$ such that $p(x_1, \dots, x_n)=1$ is satisfiable if and only if $C'(y_1, \dots, y_{m}) \neq 1$ is satisfiable. Flipping the output of
this circuit, which can be achieved by adding a constant bit at the last level, we obtain
a circuit $C = \operatorname{Mod}_2 (1 + C')$ such that $p(x_1, \ldots, x_n) = 1$ is satisfiable
if and only if $C (y_1, \ldots, y_m) = 1$ is satisfiable.

Based on this algorithm $A$, we will now construct algorithms deciding
$\PolSat (S_4)$. To this end, we show that 
$C=1$ has a solution if and only if it has a solution with at most $\gamma^{-1}(|C|)$ nonzero coordinates:
If $C$ always evaluates to $1$, this is clear.
Otherwise let $z$ be a solution to $C(z) = 1$ with $k:= \setsize{\{i: z_i=1\}}$ minimal.
Let $C'$ be the circuit obtained by replacing the input $i$ with the constant $0$ for all $i\in\ul{m}$ with $z_i=0.$
By minimality of $z,$ this circuit $C'$ computes $\text{AND}_k$.
But then $|C| \geq |C'| \geq \gamma(k)$ and thus $k \leq \gamma^{-1}(|C|)$

The deterministic algorithm is now as follows:
\begin{itemize}
    \item Step 1: Call $A$ to obtain $C$ as described above.
    \item Step 2: Check all tuples $y\in \{0,1\}^m$ with at most $\gamma^{-1}(|C|)$ nonzero entries and return true if $C(y) = 1$ for one of them. Otherwise return false.
\end{itemize}
This algorithm is correct by the above argument.
Now let $n = |p|$.
By assumption, Step 1 takes time
at most $n^{d_1}$; then we also have $m \le n^{d_1}$. There is $d_2 \in \N$ such that
every evaluation of $C$ in Step 2 takes time at most $m^{d_2} \le n^{d_1 d_2}$.
There are at most $2^{\gamma^{-1}(|C|)} \cdot  m^{\gamma^{-1}(|C|)}$ points to check. 
Therefore we can bound the runtime by
$n^{d_1} + n^{d_1 d_2} 2^{\gamma^{-1} (|C|)} n^{d_1 \gamma^{-1} (|C|)}$.
Since $|C| \le n^{d_1}$, the runtime depending on $n$ is a function in 
$\exp(O(\log(n)\cdot \gamma^{-1}(n^{d_1})))$.

Next we define the probabilistic algorithm.
We first show via induction on $m$ that for a circuit $D$ in $m$ variables, if $D(y) = 1$ has a solution then $\#\{y \mid D(y) = 1\} \geq 2^{m-\gamma^{-1}(|D|)}.$ 
For $m=0,$ this follows from $\gamma^{-1}(|D|) \geq 0.$
For the induction step, we distinguish two cases.
In the case that $D(y)=1$ has exactly one solution, $D$ computes $\text{AND}_m$, thus $|D| \geq \gamma(m),$ hence $m \leq \gamma^{-1}(|D|)$.
    Therefore $2^{m-\gamma^{-1}(|D|)} \leq 1$ is a lower bound on the number of solutions.
In the case that $D(y)=1$ has at least two distinct solutions, we assume, 
without loss of generality, that there are two solutions that disagree at index $m$.
Now for $a\in \{0,1\}$ consider the circuits $D_a(y_1, \dots, y_{m-1}) := D(y_1, \dots, y_{m-1}, a).$
For $a\in \{0,1\}$ let $S_a := \{y\in \{0,1\}^m: D(y) = 1 \land y_m=a\}.$
By the induction hypothesis, $\setsize{S_a} \geq 2^{m-1-\gamma^{-1}(|D_a|)} \geq 2^{m-1-\gamma^{-1}(|D|)}.$
Since the set $S:= \{y\in \{0,1\}^m \mid D(y) = 1\}$ is the disjoint union of $S_0$ and $S_1,$ its size is
 $\setsize{S_0} + \setsize{S_1} \geq 2^{m-\gamma^{-1}(|D|)}.$
This implies that a randomly chosen $y \in \{0,1\}^m$ is a solution to $D (y) = 1$ with
probability $p' > 2^{- \gamma^{-1} (|D|)}$.
    The probabilistic algorithm then works as follows:
    \begin{itemize}
        \item Step 1: Obtain $C$ as described above.
        \item Step 2: Generate vectors $y_1, \dots, y_k\in \{0,1\}^m$ at random (independently and uniformly
          distributed), where $k = 2^{\gamma^{-1}(|C|)}$.
        \item Step 3: Return true if $C(y_j)=1$ for some $j$ and false otherwise.
    \end{itemize}
    If there is no solution to $C(y) = 1$, then the algorithm will always give the correct answer
    ``false''.
    If $C(y) = 1$ has a solution, we let $g := 2^{\gamma^{-1}(|C|)}$. Then the probability for
    the wrong answer ``false'' is at most $(1 - \frac{1}{g})^g \le \exp (-1) < \frac{1}{2}$. 
    The runtime of the algorithm is $\exp(O(\log(n) + \gamma^{-1}(n^d))).$
\end{proof}

\newcommand{\etalchar}[1]{$^{#1}$}
\def\cprime{$'$}
\providecommand{\bysame}{\leavevmode\hbox to3em{\hrulefill}\thinspace}
\providecommand{\MR}{\relax\ifhmode\unskip\space\fi MR }
\providecommand{\MRhref}[2]{%
  \href{http://www.ams.org/mathscinet-getitem?mr=#1}{#2}
}
\providecommand{\href}[2]{#2}

 \end{document}